\newtheorem{theorem}{Theorem}%[section]
\newtheorem{lemma}{Lemma}
\newtheorem{definition}{Definition}
\newenvironment{proof}{\noindent{\bf Proof.}}{\hfill$\Box$}
\algrenewcommand{\ALG@beginalgorithmic}{\small}
\algrenewcommand\alglinenumber[1]{\small #1:}
\newcommand{\floor}[1]{\left\lfloor #1 \right\rfloor}
\providecommand{\tup}[1]{%
    \relax\ifmmode
%        \mathord{\mathcode`\-="702D\bf #1\mathcode`\-="2200}%
      \langle #1 \rangle%
    \else
%        $\mathord{\mathcode`\-="702D\bf #1\mathcode`\-="2200}$%
        $\langle$#1$\rangle$%
    \fi
}
\newcommand{\act}[1]{%
    \relax\ifmmode
        \mathord{\mathcode`\-="702D\sf #1\mathcode`\-="2200}%
    \else
        $\mathord{\mathcode`\-="702D\sf #1\mathcode`\-="2200}$%
%        {\sfshape #1}%
%        {\sf #1}%
    \fi
}
\newcommand{\remove}[1]{}
\def\mainlistofsymbols{
  %\newpage
  %\vspace{.25in}
  %\begin{center}
  %  {\Large\bf LIST OF SYMBOLS}
  %\end{center}
  \normalsize
  \vspace*{1.5 em}
  \@starttoc{los}
}
\def\partonelistofsymbols{
  \normalsize
  \vspace*{1.5 em}
  \@starttoc{p1los}
}
\def\parttwolistofsymbols{
  \normalsize
  \vspace*{1.5 em}
  \@starttoc{p2los}
}
\def\l@symbol#1#2{\addpenalty{-\@highpenalty} \vskip 4pt plus 2pt
{\@dottedtocline{0}{0em}{8em}{#1}{#2}}}
\newcommand{\newhiddensym}[2]{%
%% NN uncomment next line to work
%\addcontentsline{los}{symbol}{\protect\numberline{#1}{#2}}
}
\newcommand{\algIOA}[2]{\ifmmode{\text{#1}_{#2}}\else{$\text{#1}_{#2}$}\fi}
\newcommand{\EX}{\ifmmode{\xi}\else{$\xi$}\fi}
\newcommand{\EXF}{\ifmmode{\phi}\else{$\phi$}\fi}
\newcommand{\hist}[1]{H_{#1}}
\newcommand{\obj}[1]{O_{#1}}
\newcommand{\inter}[1]{
	\ifmmode{\left(\bigcap_{\mathcal{Q}\in#1}\mathcal{Q}\right)}
	\else{$\left(\bigcap_{\mathcal{Q}\in#1}\mathcal{Q}\right)$}
	\fi
}
\newcommand{\ledger}{\mathcal{GS}}
\newcommand{\op}{\pi}
\mathchardef\mhyphen="2D
\newcommand{\pr}{p}
\newcommand{\rdr}{r}
\newcommand{\bef}{\rightarrow}
\newcommand{\vid}[1]{\ifmmode{\nu_{#1}}\else{$\nu_{#1}$}\fi}
\newcommand{\seen}{\ifmmode{seen}\else{$seen$}\fi}
\newcommand{\maxts}[1]{\ifmmode{maxTS_{#1}}\else{$maxTS_{#1}$}\fi}
\newcommand{\maxtag}[1]{\ifmmode{maxTag_{#1}}\else{$maxTag_{#1}$}\fi}
\newcommand{\maxpair}[1]{\ifmmode{maxMPair_{#1}}\else{$maxMPair_{#1}$}\fi}
\newcommand{\mintag}[1]{\ifmmode{minTag_{#1}}\else{$minTag_{#1}$}\fi}
\newcommand{\maxps}{\ifmmode{maxPS}\else{$maxPS$}\fi}
\newcommand{\conftg}[1]{\ifmmode{confirmed_{#1}}\else{$confirmed_{#1}$}\fi}
\newcommand{\maxconftag}{\ifmmode{\ms{maxCT}}\else{$maxCT$}\fi}
\newcommand{\append}{{\textsc{append}\xspace}}
\newcommand{\atomicappends}{Atomic~Appends\xspace}
\newcommand{\gset}{G-Set\xspace}
\newcommand{\DGSO}{DSO\xspace}
\newcommand{\BDGSO}{BDSO\xspace}
\newcommand{\extends}{\Vert}
\newcommand{\ar}[1]{\textcolor{cyan}{#1}}
\newcommand{\red}[1]{{\textcolor{red}{#1}}}
\newcommand{\blue}[1]{{\textcolor{blue}{#1}}}
\newcommand{\af}[1]{{\textcolor{magenta}{#1}}}
\title{
%Atomic Appends without Consensus Using
%Distributed Ledger Interoperability with Byzantine-tolerant Grow-only Sets \\
Byzantine-tolerant Distributed Grow-only Sets: Specification and Applications\thanks{Partially supported by Regional Government of Madrid (CM) grant EdgeData-CM (P2018/TCS4499, cofunded by FSE \& FEDER) and Spanish Ministry of Science and Innovation grant ECID (PID2019-109805RB-I00, cofunded by FEDER).}
}
\author{Vicent Cholvi\\
Universitat Jaume I, Spain
\and
Antonio Fern\'andez Anta\\
IMDEA Networks Institute, Spain
\and
Chryssis Georgiou\\
Dept. of Computer Science, University of Cyprus, Cyprus
\and
Nicolas Nicolaou\\
Algolysis Ltd., Cyprus
\and
Michel Raynal\thanks{Supported by French ANR project ByBLoS (ANR-20-CE25-0002-01).}\\
IRISA, France \& PolyU, Hong Kong
\and
Antonio Russo\\
IMDEA Networks Institute, Spain}
\begin{document}

\maketitle 

\setcounter{footnote}{0}

\begin{abstract}
In order to formalize Distributed Ledger Technologies and their interconnections, a recent line of research work has formulated the notion of Distributed Ledger Object (DLO), which is a concurrent object that maintains a totally ordered sequence of records, abstracting blockchains and distributed ledgers. Through DLO, the Atomic Appends problem, intended as the need of a primitive able to append multiple records to distinct ledgers in an atomic way, is studied as a basic interconnection problem among ledgers.

In this work, we propose the {\em Distributed Grow-only Set object} (\DGSO), which instead of maintaining a sequence of records, as in a DLO, maintains a set of records in an immutable way: only Add and Get operations are provided. This object is inspired by the Grow-only Set (G-Set) data type which is part of the Conflict-free Replicated Data Types. We formally specify the object and we provide a consensus-free Byzantine-tolerant implementation that guarantees eventual consistency. We then use our Byzantine-tolerant \DGSO (\BDGSO) implementation to provide consensus-free algorithmic solutions to the Atomic Appends and {Atomic Adds} (the analogous problem of atomic appends applied on G-Sets) problems, as well as to construct consensus-free Single-Writer BDLOs.  
We believe that the \BDGSO\ has applications beyond the above-mentioned problems.
\end{abstract}
%\tableofcontents

\section{Introduction}
\label{sec:Intro}
Blockchains (as termed by Nakamoto in \cite{N08bitcoin}) or Distributed Ledger Technologies (DLTs) (as used in \cite{DLO_SIGACT18} and \cite{Raynal18}) became one of the most trendy data structures 
following the introduction of crypto-currencies \cite{N08bitcoin}
and their recent application in finance and token-economy. Despite their early wide adoption, little was known initially about the 
fundamental construction and semantic properties of DLTs. 
A number of research groups 
attempted to provide rigorous definitions to characterise the fundamental properties
of DTLs as those used in Bitcoin and beyond \cite{AnceaumePLPP19, DLO_SIGACT18, GKL15}. Among those, Fern\'andez Anta et al. \cite{DLO_SIGACT18},
was the first to identify and provide a formal definition of a reliable concurrent object, termed \emph{Distributed Ledger Object} (DLO), which conveys the essential 
building block for many DLTs. In particular, a DLO maintains a sequence of records,
and supports two basic operations: \act{append} and \act{get}. The \act{append} operation 
is used to add a new record at the end of the sequence, while the \act{get} operation
returns the whole sequence. Implementations of DLOs under client and server crashes were proposed in~\cite{DLO_SIGACT18}, and under Byzantine failures in~\cite{BAA2020}. 

The introduction to many different DLT systems have led multiple studies \cite{BAA2020, AA2019, DBLP:conf/podc/Herlihy18, Koens2019} to investigate 
the possibility of DLT interoperability, i.e., the ability for an action to be applied
over a set of DLTs, rather than in a single DTL at a time. 
Using the DLO formalism, 
%the study of systems formed by multiple DLOs that interact among each other was pursued in~
\cite{AA2019} introduced the {\em \atomicappends{} problem}, %was introduced. In this problem, 
in which several clients have a ``composite''
record (a set of semantically-linked ``basic'' records) to append. Each basic record has to be appended to a different DLO, and it must
be guaranteed that either all basic records are appended to their DLOs
or none of them is appended. 

Consider, for example, two clients $A$
and $B$, where $A$ buys a car from $B$. Record $r_A$ includes the
transfer of the car's digital deed from $B$ to $A$, and $r_B$ includes
the transfer from $A$ to $B$ of the agreed amount in some digital
currency. DLO$_A$ is a ledger maintaining digital deeds and DLO$_B$
maintains transactions in some pre-agreed digital currency.  So, while
the two records are mutually dependent, they concern different DLOs.
Hence, the \atomicappends{} problem requires that {\em either} record
$r_A$ is appended in DLO$_A$ and record $r_B$ is appended in DLO$_B$,
{\em or} no record is appended in the corresponding {DLOs}. 

In the work presented in~\cite{AA2019}, 
%the clients 
%were assumed to be selfish and rational %\cite{osborne2004introduction},
%and could have different incentives for the different outcomes. 
%{Additionally, any client 
%could 
%it was assumed that they could 
%fail by crashing.
%, which makes solving the problem more challenging. 
the authors assumed that clients may fail by crashing and showed that for some cases the
{existence of an intermediary is necessary.
%for a solution,
%and proposed the implementation of such an  intermediary over a
They materialized such an intermediary by implementing a specialized
DLT, termed \emph{Smart} DLO (SDLO).  Using the SDLO, the authors
%,  showing how this can be used to 
solved the \atomicappends{} problem
in a client competitive asynchronous  environment, in which any number
of clients, and up to $f$ servers implementing the DLOs, may crash.}
%
%[NN-TODO: Talk about SBDLO and the use of the strong primitive BAB]
A subsequent work solved the problem assuming Byzantine failures~\cite{BAA2020}, 
by introducing the notion of \emph{Byzantine Distributed Ledger Objects} (BDLO). 
Solutions for implementing BDLOs were presented, with each solution
relying on an underlying Byzantine Total-order Broadcast Service (BToB) \cite{coelho2018byzantine,CRISTIAN1995158,DBLP:conf/srds/MilosevicHS11}. Using BToB and an 
intermediary SBDLO the authors demonstrated how \atomicappends{} may be 
achieved in systems that suffer Byzantine failures. However, BToB is a strong 
primitive, and requires consensus to be solved. So one may ask: \textit{Is it 
possible to implement $\atomicappends{}$ without solving consensus?}

It was shown in~\cite{gupta2016non} that cryptocurrencies do not need consensus to be implemented. From a theoretical point of view, it was shown in~\cite{DBLP:conf/podc/GuerraouiKMPS19} that, assuming one process per account, the consensus number of cryptocurrencies is $1$. A non-sequential specification of money transfer was introduced in~\cite{DBLP:journals/eatcs/AuvolatFRT20}. It follows that Byzantine transactional systems do not necessarily need consensus, but rather can be implemented on top of less powerful data structures. In a similar manner, in this work, we observe
that intermediary S(B)DLOs and strong primitives like BToB 
\cite{DBLP:conf/srds/MilosevicHS11}, may not be 
necessary to allow interoperability between multiple DLOs. 
Note that the goal of the intermediate S(B)DLO is to collect the 
records to be appended atomically, so that when all the records involved are 
in the S(B)DLO, then the actual records are appended in their respective DLOs.
It is apparent that, for $\atomicappends$, the order of the records in the intermediary data structure is not important, but rather the membership 
property required redirects to a \textit{set} data structure.

A relevant distributed set data structure 
was presented by Shapiro et al. in~\cite{Shapiro11}
with the introduction of Conflict-Free Replicated Data Types (CRDTs). A CRDT is a data structure that can be replicated in multiple 
 network locations. CRDTs have the property that each replica can be updated independently and concurrently, but it is always mathematically possible to resolve any inconsistencies between any pair of replicas, leading eventually all the replicas to a consistent converged value when the communication  between the replica hosts is stabilized. 
 A \emph{Grow-Only Set} (G-Set) is such a CRDT, that supports operations \act{add} and \act{lookup} only. The \act{add} operation modifies the local state of the object by a union of
 the value of the set with the element we want to insert. Since \act{add} is based on union, and union is commutative, the \gset{} implementation converges. In \cite{Shapiro11} (and other subsequent works), implementations of \gset{}s where given 
 in a crash-prone environment. In order to utilise a \gset{} in more practical setups (like the ones considered in cryptocurrencies) 
 we need to examine whether such data structure is possible 
 when Byzantine failures are present in the system.
 
%  \blue{CG: We need to add a paragraph discussing the work we have found on Byzantine CRDT and comment that consensus is used here, whereas our implementation is consensus-free.}

 %\blue{\sout{{Related works:}}}
 %\blue{\sout{This is not the first time CRDTs are object of attention in the attempt to provide a byzantine fault tolerant solution. In particular}} 
 Chai and Zhao \cite{DBLP:conf/IEEEscc/Chai014} have considered the implementation of CRDTs against Byzantine failures. In particular, they describe possible threats that clients and servers can either face or cause to CRDTs, and they show a possible solution to fulfil CRDT requirements in that failure model. Their solution relies on an external synchronization service for two main purposes: to guarantee linearizable {\em reads} and {\em writes}, and to prevent server partitions caused by Byzantine behaviour. As a consequence, multiple Byzantine failures or slow processes may lead their approach to essentially always run their ``state synchronization" mechanism letting the whole data structure rely on the synchronisation service. {For the implementation of the synchronisation service 
 %is offered as an external service, either through 
 they either utilize} a central entity, or solve consensus over a distributed set of nodes.\vspace{.5em}

\noindent{\bf Contributions.} In this work we examine whether \gset{}s can be implemented when Byzantine processes are assumed in the system, 
{without using consensus. We show that an implementation of an eventually consistent \cite{VEC2009} \gset is possible, and we demonstrate how such data structure can be used}
% and 
% whether such structures can be used 
to solve \atomicappends{} and other related problems. In 
particular, our itemized contributions are the following:
\begin{itemize}
    \item Provide a formal definition of a Byzantine Grow-only Set Object (\BDGSO). [Section~\ref{sec:Spec}]
    %\item Define consistency guarantees for \BDGSO
    \item {Provide an implementation for an eventually consistent  \BDGSO. We consider such a consistency model since, although it provides weaker guarantees {than other consistency models,  it is easier and more efficient to implement, while being 
    %allows an optimal resilience regarding the number of servers that can fail. Furthermore, eventual consistency has been shown to be 
    powerful enough to be used in the type of applications we consider (described next).}} [Section~\ref{GSetimplementation}]
    %for the application purposes (see Section~\ref{sec:applications}).}
    %
    \item Use \BDGSO{}s to implement:
    \begin{itemize}
    \item Consensus-free Byzantine \atomicappends. [Section~\ref{sec:AtomicAppends}]
    \item Consensus-free Byzantine {\em Atomic Adds}. This is the analogous problem of atomic appends where records must be added in an atomic way to different \BDGSO{}s.
     {This problem could be applicable in blockchain-like systems in which the ordering of the records is not important; what is important is that the records are added in the corresponding unordered blockchains (G-Sets). {An example could be a system of G-Sets that implement personal calendars, so the records in the sets are meetings. Then, fixing a two-person meeting would imply an Atomic Add of the meeting data in the calendar of both persons.}} [Section~\ref{sec:AtoomicAdds}]
    \item Consensus-free single-writer BDLOs. This data structure can be suitable to implement whatever system that requires total order among data produced by a single writer. A punch in/out system for a company is an example of such an application in which a single writer, the employee, appends records only to his/her own ledger of presences. A cryptocurrency can be another suitable application, with one BDLO per account, because of the need to order transactions in relation to money transfers issued by the only transaction signer. [Section~\ref{subsec:SW}]
    \end{itemize}
\end{itemize}

\section{The G-Set Object}
\label{sec:Spec}
In this section we provide the fundamental definition of a concurrent 
G-Set object. 

\subsection{Concurrent Objects and the G-Set Object} 
%, and the value domain of those operations is that of $T$. 
An \textit{object type} $T$ specifies $(i)$ the set of \emph{values} (or states) that any object $\obj{}$ of type $T$ can take, and
$(ii)$ the set of \textit{operations} that a process can use to modify or access the value of $\obj{}$.
An object $\obj{}$ of type $T$ is a \emph{concurrent object} if it is a shared object accessed by multiple processes~\cite{HW90, Raynal13}.
%The \textit{state} of an object is defined 
%over a set of \emph{state variables}. 
Each operation on an object $\obj{}$ consists of an \emph{invocation} event
and {its unique matching} \emph{response} event, that must occur in this order.
A \emph{history} of operations on $\obj{}$, denoted by $\hist{\obj{}}$, 
is the sequence of invocation and response events, starting with an invocation event. 
(The sequence order of a history reflects the real time ordering of the events.)
We say that a history $\hist{\obj{}}'$ \emph{extends} a history $\hist{\obj{}}$, if $\hist{\obj{}}$ is a prefix of $\hist{\obj{}}'$.

An operation $\op$ is \emph{complete} in a history $\hist{\obj{}}$, if $\hist{\obj{}}$ 
contains both the invocation and the matching response. %\red{\sout{of $\op$, in this order}}. 
A history $\hist{\obj{}}$ is {\em complete} if it contains only complete operations; otherwise it is {\em partial}~\cite{HW90, Raynal13}.
An operation 
$\op$ \emph{precedes} an operation $\op'$ (or $\op'$ \emph{succeeds} $\op$), denoted by $\op\bef\op'$, 
in $\hist{\obj{}}$, if the response event of $\op$ appears before the invocation event 
of $\op'$ in $\hist{\obj{}}$. Two operations are \emph{concurrent} if none precedes the other. 
A complete history $\hist{\obj{}}$ is \emph{sequential} %if all operations in $\hist{\obj{}}$ are complete and 
if it contains no concurrent operations,
i.e., it is an alternative sequence of matching invocation and response events, starting with an invocation and ending with a response event.
A partial history is sequential, if removing its last event (that must be an invocation) makes it a complete sequential history.
%\ch{\sout{In the case of partial histories, the last invocation event might not be in the history.}} \af{[[AF: what about the above alternative?.]]}
%
A \emph{sequential specification} of an object $\obj{}$, 
describes the behavior of $\obj{}$ when accessed sequentially. In particular, the sequential specification of $\obj{}$ is the set of all possible sequential histories involving solely object $\obj{}$~\cite{Raynal13}.

%\subsection{Ledger Object}

A \emph{G-Set} $\ledger$ is a concurrent object that maintains a set $\ledger.S$ of \emph{records}
and supports two operations (available to any process $\pr$):
(i) $\ledger.\act{get}_\pr()$, and (ii) $\ledger.\act{add}_\pr(\rdr)$.
{A \emph{record} is any value drawn from an alphabet $A$.
\remove{A \emph{record} is a triple $\rdr=\tup{\tau, \pr, v}$, where $\tau$ is a {\em unique} record identifier from a set ${\mathcal T}$,
$\pr\in {\cal P}$ is the identifier of the process that created record $\rdr$, 
and $v$ is the data of the record drawn from an alphabet $A$. %$\valSet$.  
We will use $\rdr.p$ to denote the id of the process that created record $\rdr$; similarly, we define $\rdr.\tau$ and $\rdr.v$.}}
A process $\pr$ invokes a $\ledger.\act{get}_\pr()$ operation
to obtain the set $\ledger.S$ of records stored in the G-Set object $\ledger$
\footnote{We define only one operation to access the value of the G-Set for simplicity. In practice, other operations will also be available, %\af{\sout{like those to access individual records in the set,} 
{like {\em lookup($r$)} to check if a record $r$ is in $\ledger.S$.}}, and $\pr$ invokes a $\ledger.\act{add}_\pr(\rdr)$ operation to insert a new record $r$ in $\ledger.S$.
Initially, the set $\ledger.S$ is empty. Deleting or changing a record from $\ledger.S$ is not possible, as our objective is for the set to be immutable with respect to record modifications of any kind.
%\nn{[NN: Do we need to have the previous sentenced emphasized?]}

\begin{definition}
\label{def:sspec}
	The \emph{sequential specification} of a G-Set $\ledger$ over the sequential history $\hist{\ledger}$ is defined as follows. Let the initial value of 
	$\ledger.S=\emptyset$.
	If at the invocation event of an operation $\op$ in $\hist{\ledger}$ the value of the set $\ledger.S=V$, then:
	\begin{enumerate}
		\item  if $\op$ is a $\ledger.\act{get}_\pr()$ operation, then the response event of $\op$ returns $V$, and 
		\item if $\op$ is a $\ledger.\act{add}_\pr(\rdr)$ operation,
		then at the response event of $\op$, the value of the set in G-Set $\ledger$ is $\ledger.S=V\cup \{ r \}$.
	\end{enumerate}
\end{definition}

By comparing the sequential specification of a G-Set, as defined above, with the sequential specification of a Ledger Object  as defined in~\cite[Definition 1]{DLO_SIGACT18} (also see Appendix~\ref{appendix:DLO}), it follows that a Ledger is an {\em ordered} G-Set.

\remove{
\noindent {\bf Chryssis: It might not be directly apparent by the corresponding sequential specifications of a ledger and of a G-Set, but the first requires the stronger property that a sequence is returned, i.e., the records are in the same order, while for a G-Set it just suffices that the records are returned, no matter of their order. So one way to connect these two objects, is that a ledger is an ordered G-Set. From this point of view, it becomes more apparent that since order is not required, a G-Set should not require an Atomic Broadcast service to be implemented -- reliable broadcast suffices.}
}

\remove{% removing the centralized spec of Gsets
\subsection{Implementation of G-Sets}

Processes execute operations and instructions sequentially \red{\sout{(i.e., we make the usual well-formedess assumption 
%\nnrev{that a process will not invoke an operation before a previously invoked one has not completed}{
	where a process invokes one operation at a time)}}.
A process $\pr$ interacts with a G-Set $\ledger$ by invoking an operation (i.e., $\ledger.\act{get}_\pr()$ or $\ledger.\act{add}_\pr(\rdr)$), which causes a request to be sent to the G-Set $\ledger$, and a \red{subsequent} response to be sent from $\ledger$ to $\pr$. The response marks the end of an operation and also carries the result of that operation\footnote{We make explicit the exchange of request and responses between the process and the G-Set to reveal the fact that the G-Set object is concurrent, i.e., accessed by several processes.}. %\nnrev{The response carries the result of the operation.}{} 
The result for a $\act{get}$ operation is a set of records, while the result for an $\act{add}$ operation is a confirmation ({\sc ack}). This interaction, from the point of view of the process $p$, is depicted in Code~\ref{code:interface}. \red{In Code~\ref{alg:dl}, we also present a possible centralized implementation of the G-Set object $\ledger$ that processes can sequentially request (each block \textbf{receive} is assumed to be executed in mutual exclusion)}. 

\setlength{\columnsep}{30pt}
\begin{figure}[t]
	\begin{multicols}{2}
		%\begin{wrapfigure}{R}{0.43\textwidth}
		%	\begin{minipage}{0.43\textwidth}\vspace{-2em}
		\begin{algorithm}[H]
			\caption{\small External interface (executed by a process $p$) of a G-Set object $\ledger$}
			\label{code:interface}
			%	\begin{multicols}{2}
			\begin{algorithmic}[1]
				%		\State \textbf{function} $\ledger.\act{get}()$ is
				%		\State {\bf send} request ({\sc get}) to ledger $\ledger$
				%		\State \textbf{wait} response ({\sc getRes}, $V$) from $\ledger$
				%		\State \textbf{return} $V$
				%		\Statex
				%		%		\columnbreak
				%		\State \textbf{function} $\ledger.\act{append}(r)$ is
				%		\State \textbf{send} request ({\sc append}, $r$) to ledger $\ledger$
				%		\State \textbf{wait} response ({\sc appendRes}, $res$) from $\ledger$
				%		\State \textbf{return} $res$
				%		
				\Function{$\ledger.\act{get}$}{~} 
				\State {\bf send} request ({\sc get}) to G-Set $\ledger$
				\State \textbf{wait} response ({\sc getRes}, $V$) from $\ledger$
				\State \textbf{return} $V$ %\vspace{-1em}
				\EndFunction
				%		\Statex
				%		\columnbreak
				\Function{$\ledger.\act{add}$}{$r$}
				\State \textbf{send} request ({\sc add}, $r$) to G-Set $\ledger$
				\State \textbf{wait} response ({\sc addRes}, $res$) from $\ledger$
				\State \textbf{return} $res$
				\EndFunction
			\end{algorithmic}
			%	\end{multicols}
		\end{algorithm}
		%	\end{minipage}
		%\end{wrapfigure}
		%\end{multicols}
		%The simplest outline of a distributed ledger then could be the following
		
		\begin{algorithm}[H]
			\caption{\small G-Set $\ledger$ (centralized)}
			\label{alg:dl}
			%\begin{multicols}{2}
			\begin{algorithmic}[1]
				\State \textbf{Init:} $S \leftarrow \emptyset$
				%				\Statex
				\Receive{{\sc get}}
				\State \textbf{send} response ({\sc getRes}, $S$) to $p$
				\EndReceive
				%				\Statex
				\Receive{{\sc add}, $r$} 
				\State $S \leftarrow S \cup r$
				\State \textbf{send} resp ({\sc addRes}, {\sc ack}) to $p$
				\EndReceive
				%			\Statex
			\end{algorithmic}
			%\end{multicols}
		\end{algorithm}
	\end{multicols}\vspace{-1em}
	%\caption{Specification of the interface between a process $p$ and a ledger object $\ledger$ (left), and implementation of a centralized ledger object $\ledger$ (right).}
\end{figure}

\blue{NOTE: I removed Fig.~1. If there is enough space, maybe we could include it later.}

%Figure~\ref{fig:ledger}(left) abstracts the interaction between the processes and the set.
%\ch{[[CG: For the interest of space, I suggest we remove this figure --- also the one with the servers. Alternatively, we can make them very small and put them one next to the other]]}\\
% if the operation was successful or $NACK$ otherwise.

%\begin{figure}[t]
%	\centering
%	\includegraphics[width=3.3in]{figures/ledger-fig1}~
%	\includegraphics[width=3.3in]{figures/dledger-fig1}\vspace{-.5em}
%	\caption{The interaction between processes and the G-Set, where $r,r_1,r_2,\ldots$ are records.\break {\bf Left}: General abstraction; {\bf Right}: Distributed G-Set implemented by servers}
%	\label{fig:ledger}
%\end{figure}
} %end remove

\subsection{Distributed G-Set Objects}
\label{sec:DL}
We now define distributed G-Set objects, \DGSO{} for short, and the class of eventually consistent \DGSO{}s. These definitions are general and do not rely on the properties of the underlying distributed system, nor on the type of failures that may occur.\newline 

\noindent A \emph{\bf distributed G-Set object} (\DGSO) is a concurrent
G-Set object
that is implemented in a distributed manner. In particular, a {\DGSO} is \textit{implemented} by a set of 
(possibly distinct and geographically dispersed) computing devices, that we refer as \emph{servers}. {Each server usually maintains a local copy (replica) of the {\DGSO}.}
We refer to the processes that invoke the $\act{get}$ and $\act{add}$ operations of the distributed G-Set as {\em clients}. 

%Figure~\ref{fig:ledger}(right) depicts the interaction between the clients and the DGSO, implemented by servers.
%\vspace{-1em}

\remove{In general, servers can fail. This leads to introducing mechanisms in the algorithm that implements
the DGSO to achieve fault tolerance, like replicating the G-Set. 
Additionally, the interaction of the clients with the servers
will have to take into account the faulty nature of individual servers,
as we discuss later in the section.}

%\subsubsection{Consistency of DGSOs.}
Distribution and replication intend to ensure availability and survivability of the G-Set, in case a subset of the servers fails (by crashing or acting maliciously). 
At the same time, they raise the challenge of maintaining \emph{consistency} among the different views that different clients get of the \DGSO\footnote{This tradeoff is actually captured by the well-known CAP Theorem~\cite{brewer2012cap}.}.
\remove{:what is the latest value of the G-Set when multiple clients may send operation requests at different servers concurrently?}
Consistency semantics need to be in place to precisely describe the allowed values that a \act{get} operation may return when it is executed concurrently with other
\act{get} or \act{add} operations.

We now specify the safety properties of \DGSO\ with respect to
%to two consistency models:}
% Here, %\red{\sout{as examples,}} 
% we provide the properties that operations must satisfy in order to guarantee {two well-known consistency models}. 
%
%{Namely, the 
%(i) {\em atomic consistency} (a.k.a, linearizability)~\cite{HW90,AW94}, which provides the illusion that the DGSO is accessed sequentially respecting the real time order, even when operations are invoked concurrently, and (ii) 
\textit{eventual consistency}~\cite{VEC2009}. {Essentially, these properties require that} if an
$\act{add}(r)$ operation completes, then \textit{eventually} all $\act{get}()$ operations return sets that contain record $r$. In a similar way, other consistency guarantees such as sequential, session, causal  and atomic consistencies could be formally defined. %~\cite{MSlevels}. %Formally~\!\footnote{Our formal definition of linearizability is adapted from~\cite{AW94}.},

\remove{and  
%\textit{sequential consistency}~\cite{LL79} 
and \textit{eventual consistency}~\cite{MSlevels} semantics.  }
%(which we omit due to lack of space). 
%and (2) \textit{sequential consistency}~\cite{LL79}.
\remove{Atomicity (aka, linearizability)~\cite{AW94, HW90} %is the strongest consistency guarantee as it 
provides the illusion that the DGSO is accessed sequentially respecting the real time order, even when operations are invoked concurrently. I.e., the DGSO
seems to be a centralized 
G-Set like the one implemented by Code \ref{alg:dl}.}
\remove{
\begin{definition}
	\label{def:atomic}
	A {DGSO} $\ledger$ is {\em atomic} if, given any complete history 
	$\hist{\ledger}$, there exists a permutation $\sigma$ of the operations in $\hist{\ledger}$ such that: 
	\begin{enumerate}
		\item $\sigma$ follows the sequential specification of
		$\ledger$, and 
		\item for every pair of operations $\pi, \pi'$, if $\pi\bef \pi'$ in $\hist{\ledger}$, then $\pi$ appears before $\pi'$ in $\sigma$. 
	\end{enumerate}
\end{definition}
}
\remove{
Sequential consistency~\cite{LL79, AW94} is weaker than atomicity in the sense that it only requires that operations respect the local ordering at each process, not the real time ordering. Formally, 

\begin{definition}
	\label{def:sc}
	A distributed ledger $\ledger$ is {\em sequentially consistent} if, given any complete history $\hist{\ledger}$, there exists a permutation $\sigma$ of the operations in $\hist{\ledger}$ such that: 
	\begin{enumerate}
		\item $\sigma$ follows the sequential specification of
		$\ledger$, and 
		\item for every pair of operations $\pi_1, \pi_2$ invoked by a process $p$, if $\pi_1\bef \pi_2$ in $\hist{\ledger}$, then $\pi_1$ appears before $\pi_2$ in $\sigma$. 
	\end{enumerate}
\end{definition}
}

%\noindent
%{\em Remark:} Observe that in the above definitions we consider $\hist{\ledger}$ to be complete. As argued in~\cite{Raynal13}, the definitions can be extended to sequences that are not complete by reducing the problem of determining whether a complete sequence extracted by the non complete one is consistent. That is, given a partial history $\hist{\ledger}$, if $\hist{\ledger}$ can be modified in such a way that every invocation of a non complete operation is either removed or completed with a response event, and the resulting, complete, sequence $\hist{\ledger}'$ checks for consistency, then $\hist{\ledger}$ also checks for consistency. \\ 

%\noindent{\bf Chryssis: Again, it might not be apparent what is the difference between a linearizable DLO and a linearizable DGSO. The definitions seem identical. The difference lies in the sequential specification of the two, i.e., item 1 in the definition. A linearizable DLO must maintain real-time ordering, including the ordering of the records, as opposed to a linearizable DGSO, that the real-time order has only to do with the sets, not their order.} \blue{Maybe, when we introduce the concept of a G-Set, we should discuss how it differs from a DLO.}
%\textcolor{green}{I suppose we will do this in the introduction.}

\remove{We now give a definition of eventually consistent distributed G-Sets. Informally speaking, a distributed 
G-Set is eventual consistent, if for every
$\act{add}(r)$ operation that completes, \textit{eventually} all $\act{get}()$ operations return sets that
contain record $r$. Formally,}
%\footnote{\ch{To the best of our knowledge, this is the first formal definition
%	of eventual consistency on an object. Typically the notion of eventual consistency, e.g., in the replica literature is given
%in the form: ``once there are no updates on the replica, clients return the latest version of the replica".}},}
%
\begin{definition}
	\label{def:ec}
	A {\DGSO} $\ledger$ is {\em\bf eventually consistent} if, given any history $\hist{\ledger}$,
	%there exists a permutation $\sigma$ of the operations in $\hist{\ledger}$ such that: 
	\begin{enumerate}%[alpha]
	\item[(a)] Let $S$ be the set of records returned by any complete operation $\pi=\act{get}() \in \hist{\ledger}$. For each $r \in S$,
	there is an operation $\act{add}(r)$ whose {invocation event appears before the response event of $\pi$ in $\hist{\ledger}$, and}
	%and does not succeed $\pi$ \ar{response event}\nn{[NN: succeedence in terms of events is not defined.]}, and
		%\item[(a')] Let $\hist{\ledger}(i)$ be the set of $\act{add}()$ operations in $\hist{\ledger}$ and of the $\act{get}()$ operations issued by process $i$. There exists a permutation $\sigma$ of the operations in $\hist{\ledger}(i)$ such that $\sigma$ follows the sequential specification of $\ledger$, and
		\item[(b)] for every complete operation $\ledger.\act{add}(r) \in \hist{\ledger}$, there exists a history $\hist{\ledger}'$ that extends %\footnote{A set $X$ 
			%extends a set $Y$ when $Y\subset X$.}
		% and $X$ contains at least one more operation after $Y$.}}  
		$H_{\ledger}$ such that,
		for every history $\hist{\ledger}''$ that extends $H'_{\ledger}$, every complete operation $\ledger.\act{get}()$ in $\hist{\ledger}'' \setminus \hist{\ledger}'$ returns 
		a set that contains $r$.  
		%(Observe that the $\ledger.\act{get}()$ may not be in $\hist{\ledger}$.)
	\end{enumerate}
\end{definition}

At this point, we would like to remark that, although eventual consistency provides weaker consistency guarantees when compared, for example, with linearizability~\cite{HW90}, {it is easier and more efficient to implement}, while 
%it allows an optimal resilience regarding the number of servers that can fail. Furthermore, 
it is powerful enough to be used in the type of applications that we  later consider (see Section~\ref{sec:applications}).

%\ch{[[CG: Essentially the above definition avoids the use of time, which is not defined in the paper. The definition is a bit more
%	messy, but I think having now the informal definition before that (which is essentially observation 1) should help the reader understand
%	what we are trying to define. If you agree, then we can remove the definition below.]]}
%
%\begin{definition}
%	\label{def:ec1}
%	A distributed ledger $\ledger$ is {\em eventually consistent} if, given any complete history $\hist{\ledger}$, there exists a permutation $\sigma$ of the operations in $\hist{\ledger}$ such that: 
%	\begin{enumerate}
%		\item $\sigma$ follows the sequential specification of $\ledger$, and 
%		\item for every $\ledger.\act{append}(r) \in \hist{\ledger}$, there is a finite time $t_r$ so that every complete $\ledger.\act{get}()$ operation invoked after $t_r$ returns a sequence that contains $r$. (Observe that the $\ledger.\act{get}()$ may not be in $\hist{\ledger}$.)
%	\end{enumerate}
%\end{definition}
%
%

%\nnrev{Alternatively, following, a liveness assumption can be made where every %invocation event has a matching response event (and hence all histories are %complete).}{}

%\noindent{\bf Chryssis: Sections 2 and 3.1 could be merged in one section (Section 2), and Section 3.2 could be a section of its own (Section 3).}

\subsection{Distributed Setting and Byzantine-tolerant \DGSO}

We consider a distributed setting consisting of processes (clients and servers) and an underlying communication graph in which each process can communicate with every other process. 

\noindent{\textbf{Asynchrony.}}
{Both processing and communication are asynchronous.} 
%the  processes and the communication network are asynchronous.
%This means that 
Therefore, each process proceeds at its own speed, which can
vary arbitrarily and remains always unknown to the other processes. 
Message transfer delays are  arbitrary but finite and remain
always unknown to the processes.
\vspace{.3em}

\noindent{\textbf{Failure Model.}}
No message is lost, duplicated or modified. 
  Processes (clients and servers) can fail arbitrarily, i.e., they can
  be Byzantine.  Specifically, we assume a \emph{Byzantine system} in
  which the number of servers that can arbitrarily fail is bounded by $f$, and
  %the exact relationship between the total number of servers and $f$ will be specified for each algorithm. 
  %and that the
  in which the total number of servers, $n$, is at least $3f+1$. 
  %\ar{AR: Maybe since we provide different algorithm with different constraints on server number it does not make sense to cite $2f+1$ here}. 
  For clients we assume that %consider
  %two cases: $(i)$ any number of clients can be Byzantine; %$(ii)$ 
  any of them %{of at least $2t+1$} clients
  can be Byzantine. 
  %\blue{CG: Don't we need $t$?}
  \vspace{.3em}
  %\ar{AR: Remove the unbounded case (i) ?}

\noindent{\textbf{Public and private keys.}}
We assume that each process $p$ (client or server) has a pair of
public and private keys, and that the public keys have been distributed reliably to all the
processes that may interact with each other. Hence, we discard the possibility of spurious or fake processes (there cannot be Sybil attacks).
%\sout{and a cryptographic certificate containing its public key. These certificates are generated by a reliable authority, so we discard the possibility of spurious or fake processes (there cannot be Sybil attacks), and have been distributed to all the processes that may interact with each other.}
%
We also assume
that messages sent by any process (server or client) are
%\af{signed and} \ar{[MR: enough one of the two, otherwise explain why]} \ar{[AR: keep authenticated that is more general]} 
authenticated, so that messages corrupted or fabricated by Byzantine
processes are detected and discarded by correct
processes~\cite{CRISTIAN1995158}.  Communication channels between
correct processes are reliable but asynchronous. 
\vspace{.3em}

\noindent{\bf Byzantine-tolerant \DGSO{}s.} Our first aim is to propose an algorithm that implement an eventual-consistent \DGSO
$\ledger$ in a Byzantine asynchronous system.  {Here we present the
  properties that a \DGSO should satisfy with respect to \emph{correct
    processes}, given that Byzantine processes may return any
  arbitrary set or add any arbitrary record:}
% Since Byzantine clients
% and servers can behave arbitrarily, we define the properties that a
% DLO must satisfy adapted to Byzantine systems. In particular, since
% Byzantine processes may return any arbitrary sequence or append any
% record, the properties only consider the \emph{correct processes}.
%
\begin{itemize}
\item \emph{Byzantine Completeness (BC)}: All the $\act{get}()$ and $\act{add}()$ operations invoked by correct clients eventually
  complete.
%    \item \emph{Byzantine Inclusion (BI)}: If two \textit{correct clients} issue two $\ledger.\act{get}()$ operations that return sets $S$ and $S'$ respectively, then either $S$ is a subset of $S'$ or vice-versa. \\{\bf Chryssis: Maybe this is too strong? Perhaps we could also define a weak BI that it only requires that the two sets have non-empty intersection?}
    %\item \emph{Byzantine Linearizability (BL)}: This is the property defined in Definition~\ref{def:atomic} wrt to the $\act{get}()$ operations invoked by correct clients {and the $\act{add}(r)$ operations that insert the records $r$ returned in those $\act{get}()$ operations}. 
    \item \emph{Byzantine Eventual Consistency (BEC)}: This is the property {of}
     Definition~\ref{def:ec} with respect to the $\act{get}()$ operations invoked by correct clients {and the $\act{add}(r)$ operations that insert the records $r$ returned in those $\act{get}()$ operations}. 
    %Let $G$ be the set of
    %  all complete {\get} operations issued by correct clients. Let
    %  $A$ be the set of complete {\append} operations
    %  $\ledger.{\append}(r)$ such that $r \in S$ and $S$ is the
    %  sequence returned by some operation $\ledger.{\get}() \in
    %  G$. Then linerizability holds with respect to the set of
    %  operations $G \cup A$ (notice that   $r \in S$ implies that
    %  $\ledger.{\append}(r)$ precedes  $\ledger.{\get}()$ in the total
    %  order on the  operations).  This property is similar to the one
    %  described in \cite{DBLP:journals/mst/MostefaouiPRJ17} for
     % registers.
\end{itemize}
In the remainder, we say that a \DGSO is {\em Byzantine Tolerant},
denoted \BDGSO, and eventually consistent if it satisfies properties {BC} and BEC.\vspace{.3em}
%and {\em one} of the properties BL or BEC. If it satisfies property BL, then we have a {\em linearizable BDGSO}, otherwise we have an {\em eventually consistent BDGSO}.

\noindent{\bf Byzantine Reliable Broadcast.}
%\label{sec:brb}
The algorithms presented in the next section to implement \BDGSO{}s are based on an underlying Byzantine Reliable Broadcast (BRB) service~\cite{abap,Raynal18}, which ensures that a message sent by a correct process is received by all correct processes, and that all correct processes {eventually} receive the same set of messages. The service provides two operations, BRB-broadcast and BRB-delivery; the first broadcasts a message to all processes, and the second delivers a message that was previously broadcast. The service is {
used by the servers, and from their
\remove{implemented by the servers, and from a client's}} point of view, the BRB service guarantees the following properties (as given in~\cite{Raynal18}):
\begin{itemize}
\item \textit{Validity}: if a correct process $p_i$ BRB-delivers a message $m$ from a correct process $p_j$, then $p_j$ BRB-broadcast $m$.
\item \textit{Integrity}: a message is BRB-delivered at most once by a correct server.
\item \textit{Termination 1 (local)}: if a correct process BRB-broadcasts a message, it BRB-delivers it.
\item \textit{Termination 2 (global)}: if a correct process BRB-delivers a message, all correct processes BRB-deliver it.
\end{itemize}

Validity relates outputs to inputs. Validity and integrity concern safety. 
Termination is on the fact that messages must be BRB-delivered; it concerns liveness. 
It follows (cf.~\cite{Raynal18}) that all correct processes BRB-deliver the same set of messages,
which includes all the messages they BRB-broadcast.

%\begin{itemize}
%\item \textit{Validity}: if a correct server BRB-broadcasts a message, it eventually BRB-delivers it.
%\item \textit{Agreement}: if a correct server BRB-delivers a message, all correct servers will eventually BRB-deliver that message.
%\item \textit{Integrity}: a message is BRB-delivered by a correct server at most once, and only if it was previously BRB-broadcast.
%\end{itemize}

%\af{[[AF: Striclty speaking, we do not need that a message is delivered at most once, since duplicates are discarded.]]}

\remove{       
{\bf Chryssis: Reached up to here}     \\

       Such a communication abstraction is based on
         appropriate Byzantine-tolerant consensus
         algorithms~\cite{DLO_SIGACT18,DBLP:books/sp/Raynal18}.
         Recall that consensus is required in order to implement the
         strong prefix property of a DLO while our Failure Model reports $2f+1$ as minimum number of servers
         because the BAB service is treated as a black box. Consensus in Byzantine failure model requires $3f+1$
         but DLOs are implemented on top of it. Server set implementing DLOs and one implementing the BAB service can be completely disjoined so it is right making assumptions that only involves DLO's operations.pornhub
         (cf.~\cite{DBLP:conf/spaa/AnceaumePLPP19,DLO_SIGACT18}).
       %}
  %allows the correct servers to order
  %the operations on a given ledger in the very same %order {(property BSP)}.\nn{[NN: i am not sure i understand what we want to say in the previous sentence. I think the BAB helps so the messages arrive in the same order in each server. Maybe we should not talk about operation ordering here.]}
  %Recall %from Section~\ref{sec:Intro} 

 The work in~\cite{DLO_SIGACT18} uses an underlying crash-tolerant
 Atomic Broadcast (AB) service to implement a crash-tolerant DLO.  Due
 to the very nature of Byzantine faults, replacing AB with BAB in the
 algorithms~\cite{DLO_SIGACT18} is not sufficient to produce
 Byzantine-tolerant upper layer
 algorithms~\cite{DBLP:books/sp/Raynal18}.
}

 \remove{%%%%%%%%%%%%%%%%%%%%
Note that the work in \cite{DLO_SIGACT18} utilized a crash-tolerant
Atomic Broadcast (AB) service to implement a crash-tolerant DLO. The
properties assumed here for the BAB service are similar to their
counterpart in the AB service, but applied only to correct processes
(since in the AB service processes stop when they fail, these
properties could be satisfied by the whole set of processes).  It is
important to mention that it is not enough to replace the AB service
with a BAB service in the algorithms of \cite{DLO_SIGACT18} to
implement a Byzantine DLO, and ensure the satisfaction of properties
BC, BSP, and BL \af{(give a reason, even informally)}. Therefore, we
need to introduce some additional machinery.
 }%%%\Xomit{%%%%%%%%%%%%%%%%%%%%
%======================================================================

%\newpage
%\input{implementation.tex}
%\section{Distributed G-Set Implementations in a System with Byzantine Failures}
\section{Eventually Consistent \BDGSO\ Implementation}
\label{GSetimplementation}
In this section we provide the implementation of %linearizable and 
eventually consistent distributed G-Sets in an asynchronous distributed system with Byzantine failures. The implementation builds on a generic deterministic Byzantine-tolerant reliable broadcast service~\cite{abap,Raynal18}, which provides the properties given in the previous section. Our implementation is {\em optimally resilient}, in the sense that it can tolerate up to $f$ Byzantine servers, out of  $n\geq3f+1$ servers. 

\setlength{\columnsep}{10pt}
\begin{figure}[t]
%	\begin{multicols}{2}
		%\begin{wrapfigure}{R}{0.43\textwidth}
		%	\begin{minipage}{0.43\textwidth}\vspace{-2em}
		\begin{algorithm}[H]
			\caption{\small Client API and algorithm for Eventually Consistent Byzantine-tolerant Distributed G-Set Object $\ledger$}
			\label{code:ec-client-BGS}
			%	\begin{multicols}{2}
			\begin{algorithmic}[1]
    				\State  \textbf{Init:} $c \leftarrow 0$
    				\Function{$\ledger.\act{get}$}{~} \Comment{{Invocation event}} \label{C3-L02}
    				\State $c \leftarrow c + 1$
    			        	\State
                             {\bf send} request {\sc get}($c$, $p$) to $3f +1$ different servers \label{C3-L04}
    				\State \textbf{wait} responses {\sc getResp}($c$, $i$, $S_i$) from $2f +1$ different servers \label{C3-L05}
    				\State $S \leftarrow \{r:$ record $r$ is in at least $f+1$ sets $S_i \}$ \label{C3-L06}
    				\State \textbf{return} $S$ \Comment{{Response event}} \label{C3-L07}
    				\EndFunction
    				\Function{$\ledger.\act{add}$}{$r$} \Comment{{Invocation event}} \label{C3-L08}
    				\State $c \leftarrow c + 1$
    				\State \textbf{send} request {\sc add}($c$, $p$, $r$) to $2f +1$ different servers \label{C3-L10}
    				\State \textbf{wait} responses {\sc addResp}($c$, $i$, {\sc ack}) from $f +1$ different servers \label{C3-L11}
    				\State \textbf{return} {\sc ack} \Comment{{Response event}} \label{C3-L12}
    				\EndFunction
    			\end{algorithmic}
		\end{algorithm}\vspace{-2em}
		%	\end{minipage}
		%\end{wrapfigure}
		%\end{multicols}
		%The simplest outline of a distributed ledger then could be the following
%	\end{multicols}\vspace{-1em}
	%\caption{Specification of the interface between a process $p$ and a ledger object $\ledger$ (left), and implementation of a centralized ledger object $\ledger$ (right).}
\end{figure}

\setlength{\columnsep}{10pt}
\begin{figure}[t]
	\begin{algorithm}[H]
			\caption{\small Server algorithm for Eventually Consistent Byzantine-tolerant Distributed G-Set Object}
			\label{code:ec-server-BGS}
			\begin{algorithmic}[1]
				\State \textbf{Init:} $S_i \leftarrow \emptyset$
				\Receive{{\sc get}($c$, $p$)} \Comment{{Signature of $p$ is validated}}
				\State \textbf{send} response {\sc getResp}($c$, $i$, $S_i$) to $p$ \label{C4-L03}
				\EndReceive
				
				\Receive{{\sc add}($c$, $p$, $r$)} \Comment{{Signature of $p$ is validated}}
				\If{($r \notin S_i$)} 
				\State \act{BRB-broadcast}({\sc propagate}($i$, {\sc add}($c$, $p$, $r$))) \label{C4-L06}
				\State \textbf{wait until} $r \in S_i$ \label{C4-L07}
				\EndIf
				\State  \textbf{send} response {\sc addResp}($c$, $i$, {\sc ack}) to $p$
				\EndReceive
				\Upon{\act{BRB-deliver}({\sc propagate}($j$, {\sc add}($c$, $p$, $r$)))} \Comment{{Signatures of $j$ and $p$ are validated}} \label{C4-L09}
				\If{($r \notin S_i$) and ({\sc add}($c$, $p$, $r$) was received from $f+1$ different servers $j$)} \label{C4-L10}
				\State $S_i \leftarrow S_i  \cup \{r\}$ \label{C4-L11}
				\EndIf
				\EndUpon
			\end{algorithmic}
		\end{algorithm}\vspace{-1em}
\end{figure}

%In this section we introduce the implementation of an eventually consistent BDGSO. 

Algorithm~\ref{code:ec-client-BGS} presents the specification of 
a client process, while Algorithm~\ref{code:ec-server-BGS} presents the specification of a server.
%is the same as in Algorithm~\ref{code:ec-server-BGS}. 
%specifies the algorithm run by clients (the algorithm run by servers is the same used in the implementation of the linearizable BDGSOs). 
%As we now guarantee weaker consistency, we may relax our failure assumption and allow the number of servers to be $n > 3f+1$ (optimal resilience in the 
%case of Byzantine failures). 
%\noindent 
We now present a high level description of {how the two algorithms together implement an eventually consistent \BDGSO.}
%\sout{Algorithm~\ref{code:ec-client-BGS}.}}
%that can fail is greater (namely $n > 3f+1$ vs $n > 4f+1$). Following, we provide some intuition about how such algorithms behave.}
%
\begin{itemize}
%Intuition of Code~\ref{code:ec-client-BGS}: The claim is that this code, \red{in combination with Code~\ref{code:ec-server-BGS},} implements an eventually consistent BDGSO, provided $n \geq 3f+1$, where $n$ denotes the number of servers.

\item
When processing a $\ledger.\act{add}(r)$ operation a client
sends {\sc add} messages to a set of $2f+1$ servers, which guarantees that at least $f+1$ correct servers process it. 
%Then, the operation is considered successful and can complete
{These correct servers broadcast the record $r$ to all servers using the BRB service, which leads to all correct servers $i$ adding $r$ to their replicas $S_i$ of the set.
When $f+1$ acknowledgement messages are received from the servers, the operation completes.}
{\remove{Since at least one of these servers is correct, this implies that, eventually, all correct servers will have $r$ in their sets.}}

\item
When processing a $\ledger.\act{get}()$ operation, a client requests their {replicas of the set} to $3f+1$ servers, via {\sc get} messages. 
We know that at least $2f+1$ of these servers will reply, since there are at least $2f+1$ correct servers among them. The first $2f+1$ responses contain at least $f+1$ responses from correct servers (and may contain up to $f$ responses from Byzantine servers). The $\ledger.\act{get}()$ operation returns the records that are contained in the {sets} of at least $f+1$ such responses, since each of these records is in at least one correct server set.

\item {Every server $i$ maintains a replica $S_i$ of the set $\ledger.S$. When server $i$ receives a {\sc get}($c$, $p$) message from a process $p$ it returns its current set $S_i$ to $p$. When $i$ receives a message {\sc add}($c$, $p$, $r$) from $p$, it makes sure $r$ has been included in its replica $S_i$ before sending an acknowledgment. {Server $i$ adds a record $r$ to its replica $S_i$ only if a corresponding add request has been processed} by at least one correct server. This is guaranteed by the BRB service and the requirement of receiving {\sc propagate}($j$, {\sc add}($c$, $p$, $r$)) from $f+1$ different servers. The properties of the BRB service also guarantee that once a record $r$ is delivered, then all correct servers will eventually add record $r$ to their replicas.}
\end{itemize}

We now provide the complete proof that the combination of Algorithms~\ref{code:ec-client-BGS}  and \ref{code:ec-server-BGS} implement an eventually consistent \BDGSO. In the proofs we consider that an operation $\pi$ is invoked in Lines \ref{C3-L02} or \ref{C3-L08} of Algorithm~\ref{code:ec-client-BGS}, and responds in Lines~\ref{C3-L07} or \ref{C3-L12} (resp.) of the same algorithm. 
{Let us first show that Byzantine Completeness holds, i.e., that all operations invoked by correct processes eventually complete.}

\begin{lemma}
\label{l-EC-BC}
Algorithms~\ref{code:ec-client-BGS} and~\ref{code:ec-server-BGS} guarantee Byzantine Completeness (BC) in a system in which at most $f$ out of $n \geq 3f+1$ servers are Byzantine.
\end{lemma}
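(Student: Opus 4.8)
The plan is to separately establish completeness for the two operation types, $\act{add}$ and $\act{get}$, since each uses different quorum thresholds. The key system facts I will lean on are: there are at least $n-f \geq 2f+1$ correct servers, correct servers respond to every request (the channels between correct processes are reliable, and the server code has no blocking other than the explicit \textbf{wait until} in the \act{add} handler), and the four BRB properties — in particular Termination~1 (local) and Termination~2 (global) — which guarantee that a message BRB-broadcast by a correct server is eventually BRB-delivered by \emph{all} correct servers.

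First I would handle the $\ledger.\act{get}()$ operation, which is the easier case. A correct client sends {\sc get} to $3f+1$ servers and waits for $2f+1$ responses (Lines~\ref{C3-L04}--\ref{C3-L05}). Since at most $f$ of the contacted servers are Byzantine, at least $2f+1$ of them are correct, and each correct server immediately replies with {\sc getResp} (Line~\ref{C4-L03}) with no blocking. Hence the client receives at least $2f+1$ responses, the \textbf{wait} condition is met, and the operation returns in Line~\ref{C3-L07}.

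The harder case, and the main obstacle, is the $\ledger.\act{add}(r)$ operation, because the server's \act{add} handler blocks at \textbf{wait until} $r \in S_i$ (Line~\ref{C4-L07}); I must argue this wait terminates at every correct server that processes the request. The argument I would give is: a correct client sends {\sc add}($c,p,r$) to $2f+1$ servers (Line~\ref{C3-L10}), of which at least $f+1$ are correct. Each such correct server, if $r \notin S_i$, executes \act{BRB-broadcast}({\sc propagate}($i$, {\sc add}($c,p,r$))) (Line~\ref{C4-L06}). By BRB Termination~1 and~2, every one of these at least $f+1$ \textsc{propagate} messages is eventually BRB-delivered at every correct server. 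Thus every correct server eventually has the \act{BRB-deliver} event fire (Line~\ref{C4-L09}) for {\sc add}($c,p,r$) from at least $f+1$ distinct correct senders, so the condition in Line~\ref{C4-L10} is satisfied and every correct server sets $r \in S_i$ in Line~\ref{C4-L11}. In particular, each of the $\geq f+1$ correct servers that received the client's {\sc add} request eventually satisfies $r \in S_i$, releases its \textbf{wait until}, and sends {\sc addResp} back to the client. The client therefore collects at least $f+1$ acknowledgements (Line~\ref{C3-L11}) and returns in Line~\ref{C3-L12}. Two subtleties I would be careful to address: the case $r \in S_i$ already holds when the request arrives (then the server skips the \textbf{if} block and acknowledges directly), and the fact that a server counts {\sc propagate} messages from $f+1$ \emph{different} servers — this is why contacting $2f+1$ servers (guaranteeing $f+1$ correct broadcasters) is exactly the threshold needed to unblock the wait without relying on any Byzantine server's cooperation.
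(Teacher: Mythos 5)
Your \act{get} case matches the paper's argument and is fine, but your \act{add} case has a genuine gap. Your unblocking argument assumes that all of the (at least) $f+1$ correct servers that receive the {\sc add}($c,p,r$) request actually execute the BRB-broadcast of Line~\ref{C4-L06}, i.e., that there really are $f+1$ correct broadcasters of {\sc propagate}($\cdot$, {\sc add}($c,p,r$)). That is exactly what fails in the \emph{mixed} situation where some correct recipients already hold $r \in S_i$ (they skip the \textbf{if} block and never broadcast) while others have $r \notin S_j$ (they broadcast and block in Line~\ref{C4-L07}). Records are not tied to a single operation: the same $r$ may have been added earlier by another client, or partially propagated through an earlier add, and asynchrony can leave correct servers in different states when the new request arrives. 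In such a run it is possible that, say, only one correct server broadcasts {\sc propagate}($\cdot$, {\sc add}($c,p,r$)); then the threshold of $f+1$ different senders in Line~\ref{C4-L10} is never reached \emph{for this message}, and your argument gives no reason why the \textbf{wait until} $r \in S_j$ of the blocked servers ever terminates. Your stated subtlety about ``$r \in S_i$ already holds'' only covers the server that acknowledges directly; it says nothing about the servers left waiting, and your closing claim that contacting $2f+1$ servers ``guarantees $f+1$ correct broadcasters'' is false precisely in this situation.

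The paper closes this hole with a case analysis that your proof is missing. If some correct recipient $i$ already has $r \in S_i$ when it processes the request, then $r$ entered $S_i$ via Line~\ref{C4-L11}, meaning $i$ had BRB-delivered $f+1$ {\sc propagate} messages carrying $r$ (from the earlier add); by Termination~2, every correct server eventually BRB-delivers those same $f+1$ messages and hence eventually inserts $r$ into its own replica, which releases every blocked wait --- independently of how many servers broadcast for the \emph{present} request. If no correct recipient has $r$, then all of them broadcast and your argument (the paper's second case) applies verbatim. So the missing idea is not cosmetic: in the mixed case the mechanism that unblocks the waiting servers is the delivery of the earlier propagate messages, not the ones triggered by the current operation, and a complete proof must invoke it.
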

\begin{proof}
Consider an operation $\ledger.\act{get}_p()$ invoked by a correct client $p$. We claim that the operation eventually completes. From Algorithm~\ref{code:ec-client-BGS}, Line~\ref{C3-L04}, $p$ sends a request {\sc get}($c$, $p$) to $3f+1$ servers and waits for
responses {\sc getResp}($c$, $i$, $S_i$) from $2f +1$ different servers. From the $3f+1$ servers to which the request is sent,
at most $f$ can be Byzantine, so at least $2f+1$ are correct servers that will eventually receive the {\sc get}($c$, $p$) message.
These servers will immediately send the corresponding response {\sc getResp}($c$, $i$, $S_i$) to $p$ (Line~\ref{C4-L03} of Algorithm~\ref{code:ec-server-BGS}).
When these responses are received eventually, the waiting in Line~\ref{C3-L05} of Algorithm~\ref{code:ec-client-BGS} will end. Since there is no other waiting condition, the operation will execute the return instruction and complete. 

Consider now an operation $\pi=\ledger.\act{add}_p(r)$ invoked by a correct client $p$. Then, the request {\sc add}($c$, $p$, $r$) is sent to $2f+1$ servers (Algorithm~\ref{code:ec-client-BGS}, Line~\ref{C3-L10}), and $p$ waits until responses {\sc addResp}($c$, $i$, {\sc ack}) are received from $f+1$ different servers. Since at most $f$ servers can be Byzantine, at least $f+1$ correct servers will receive and process the request. We prove that all these correct servers will send the corresponding response, the
waiting in Line~\ref{C3-L11} will end, and operation $\pi$ will complete.

Let us consider the set $C$ of correct servers that receive request {\sc add}($c$, $p$, $r$). Assume first that there is some server
$i \in C$ that has $r \in S_i$ when the request is received and processed. Then, server $i$ sends immediately response {\sc addResp}($c$, $i$, {\sc ack}) to $p$. Moreover, $r$ was inserted in $S_i$ in Line~\ref{C4-L11} of Algorithm~\ref{code:ec-server-BGS}, which implies that $i$ received via \act{BRB-deliver} at least $f+1$ messages {\sc propagate}() from different servers containing {\sc add}($c$, $p$, $r$) requests.
From the \emph{Termination 2} property of the BRB service, all correct processes will receive the same $f+1$ messages {\sc propagate}().
%After this happens, it will hold that $r \in S_i$. 
Consider any other correct server $j \in C$ that receives request {\sc add}($c$, $p$, $r$). If $r \in S_j$ when the request is received and processed, server $j$ sends the response {\sc addResp}($c$, $j$, {\sc ack}) to $p$ immediately. Otherwise, $r \notin S_j$ when the request is received and processed, and $j$ waits in Line~\ref{C4-L07}. From the above argument, eventually $r$ will be inserted in $S_j$, the waiting will end, and $j$ will send response {\sc addResp}($c$, $j$, {\sc ack}) to $p$.

Assume now that no correct server $i \in C$ has $r \in S_i$ when it receives request {\sc add}($c$, $p$, $r$). Then, all the (at least $f+1$) correct servers in $C$ that receive and process the request invoke \act{BRB-broadcast}({\sc propagate}($i$, {\sc add}($c$, $p$, $r$))) and start waiting in Line~\ref{C4-L07}. From the \emph{Termination 1} property of the BRB-service, if a correct server BRB-broadcasts a message, it also eventually BRB-delivers it. Moreover,
from \emph{Termination 2}, if it BRB-delivers the message, all correct servers also BRB-deliver it. So each correct server $i \in C$ will process in Lines~\ref{C4-L09}-\ref{C4-L11} messages {\sc propagate}($j$, {\sc add}($c$, $p$, $r$)) from at least $f+1$ different servers $j$. Hence, server $i$ will insert $r$ in $S_i$ in Line~\ref{C4-L11},
the waiting will end, and $i$ will send response {\sc addResp}($c$, $i$, {\sc ack}) to $p$.
\end{proof}

\begin{theorem}
\label{proof:ec}
Algorithms~\ref{code:ec-client-BGS} and~\ref{code:ec-server-BGS} implement an Eventually Consistent {\em \BDGSO}, in a system in which at most $f$ out of $n \geq 3f+1$ servers are Byzantine. 
\end{theorem}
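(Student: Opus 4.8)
The plan is to dispatch Byzantine Completeness immediately by invoking Lemma~\ref{l-EC-BC}, and then to devote the whole proof to Byzantine Eventual Consistency, i.e.\ to the two parts of Definition~\ref{def:ec} restricted to correct clients. Throughout I will use the single combinatorial fact that, since at most $f$ of the $n\geq 3f+1$ servers are Byzantine, any set of $f+1$ servers contains at least one correct server and any set of $2f+1$ responders contains at least $f+1$ correct servers.

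For part~(a) I would trace a returned record back to a genuine $\act{add}$ through the two quorum thresholds of the algorithm. Fix a complete $\ledger.\act{get}()$ of a correct client $p$ that returns a set $S$, and let $r\in S$. By Line~\ref{C3-L06}, $r$ lies in at least $f+1$ of the sets $S_i$ received by $p$, so at least one of these came from a correct server $i$ whose replica already contained $r$ when it answered in Line~\ref{C4-L03}. A correct server inserts $r$ only in Line~\ref{C4-L11}, which requires having BRB-delivered a {\sc propagate} message carrying {\sc add}$(c,p,r)$ from $f+1$ distinct servers; at least one such sender is correct and, by Line~\ref{C4-L06}, broadcast it only after receiving a (signed) {\sc add}$(c,p,r)$ request. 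Chaining these events $-$ add request received $\to$ propagate broadcast $\to$ propagate delivered $\to$ $r$ inserted in $S_i$ $\to$ {\sc getResp} sent $\to$ {\sc getResp} received by $p$ $-$ places the issuing of the add before the response event of the $\act{get}$, which is exactly what part~(a) asks.

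For part~(b) I would combine the \emph{global termination} of the BRB service with the grow-only nature of the replicas. Consider a complete $\ledger.\act{add}(r)$ of a correct client; it collected $f+1$ {\sc addResp} acknowledgements, at least one from a correct server $i$, and (as argued in Lemma~\ref{l-EC-BC}) such a server holds $r$ only after BRB-delivering $f+1$ matching {\sc propagate} messages. By \emph{Termination 2} every correct server eventually BRB-delivers those same $f+1$ messages and hence inserts $r$ in Line~\ref{C4-L11}. I would then take $\hist{\ledger}'$ to be the extension of $\hist{\ledger}$ up to the (finite) point at which every correct server has $r$ in its replica. Because records are never removed, any $\ledger.\act{get}()$ of a correct client lying entirely in $\hist{\ledger}''\setminus\hist{\ledger}'$ is processed by its correct responders only after they already hold $r$; such a $\act{get}$ gathers $2f+1$ responses of which at least $f+1$ are from correct servers, so $r$ occurs in at least $f+1$ of the sets and is therefore included in the returned $S$ by Line~\ref{C3-L06}. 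This establishes part~(b).

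The routine bookkeeping here is the quorum counting; the two delicate points are making the causal chain in part~(a) genuinely witness a real-time ordering between the add and the $\act{get}$ response, and, in part~(b), pinning down the extended history $\hist{\ledger}'$ so that the nested quantifiers of Definition~\ref{def:ec}(b) are satisfied $-$ it is the grow-only property that prevents a later $\act{get}$ from ``losing'' $r$, and \emph{Termination 2} that guarantees propagation to \emph{all} correct servers. Combining Byzantine Completeness with both parts of Byzantine Eventual Consistency yields the theorem.
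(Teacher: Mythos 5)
Your treatment of Byzantine Completeness (via Lemma~\ref{l-EC-BC}) and your argument for part~(a) are sound and follow essentially the same route as the paper: quorum counting through Line~\ref{C3-L06}, then tracing the record back through Line~\ref{C4-L11}, the $f+1$ BRB-delivered {\sc propagate} messages, and a correct broadcaster that must have received a genuine {\sc add} request. Your part~(b) argument for adds issued by \emph{correct} clients is also valid, and in fact slightly more direct than the paper's: you exploit the completeness of the $\act{add}$ (the $f+1$ {\sc addResp} acknowledgements, one of which comes from a correct server that only acknowledges once $r\in S_i$), whereas the paper runs a proof by contradiction from the $f+1$ correct servers that receive the request and BRB-broadcast.

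However, there is a genuine gap in the scope of your part~(b). You frame BEC as ``Definition~\ref{def:ec} restricted to correct clients,'' but the property as defined in the paper concerns the $\act{get}()$ operations of correct clients \emph{and the $\act{add}(r)$ operations that insert the records $r$ returned in those $\act{get}()$ operations} --- and such an $\act{add}(r)$ may have been issued by a \emph{Byzantine} client. For that case your argument collapses at its first step: a Byzantine client need not send requests to $2f+1$ servers nor collect $f+1$ acknowledgements, so you cannot infer the existence of a correct server holding $r$ from the structure of the client protocol. The paper closes this case with a one-line observation: for a Byzantine client's $\act{add}(r)$ to belong to the history at all, $r$ must have been returned by some correct client's $\act{get}()$, and then (by exactly your part~(a) reasoning) some correct server already has $r\in S_i$, hence has BRB-delivered $f+1$ matching {\sc propagate} messages; from there \emph{Termination~2} propagates $r$ to every correct server and the rest of your argument applies verbatim. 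Adding this case makes your proof complete.
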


\begin{proof}
We need to prove that Algorithms~\ref{code:ec-client-BGS} and~\ref{code:ec-server-BGS} guarantee Byzantine Completeness (BC) and Byzantine Eventual Consistency (BEC). BC is shown to be satisfied in Lemma~\ref{l-EC-BC}. Regarding Byzantine Eventual Consistency, we need to demonstrate properties (a) and (b) of Definition \ref{def:ec} with respect to the $\act{get}()$ operations invoked by correct clients and the $\act{add}(r)$ operations that insert the records $r$ returned in those $\act{get}()$ operations. Let $H_{\ledger}$ be any history including only invocation and response events of these operations.
		
{\em Property (a):} 
Consider a complete operation $\pi=\act{get}_p() \in \hist{\ledger}$ invoked by a correct client $p$, let $S$ be
the set returned by $\pi$, and consider any $r \in S$. From Line~\ref{C3-L06} of Algorithm~\ref{code:ec-client-BGS}, $r$ belongs to at least $f+1$ sets $S_i$ 
received in responses {\sc getResp}($c$, $i$, $S_i$) from a set $C$ of different servers.
All these responses must have been sent before the response event of $\pi$ (Line~\ref{C3-L07} of Algorithm~\ref{code:ec-client-BGS}).

Observe that $C$ contains at least one correct server $i$.
This mean that some correct server $i$ had $r \in S_i$ when it sent
the response {\sc getResp}($c$, $i$, $S_i$). A server $i$ only adds a record to its local set $S_i$ if that record was BRB-delivered in {\sc propagate}($j$, {\sc add}($c'$, $p'$, $r$))) 
from $f+1$ different servers $j$ (Line~\ref{C4-L10} of Algorithm~\ref{code:ec-server-BGS}). From the Validity property of the BRB service, this means that at least $f+1$ servers called
\act{BRB-broadcast}({\sc propagate}($j$, {\sc add}($c'$, $p'$, $r$)) in Line~\ref{C4-L06}. Again, since at least one of them is correct, at least one invocation of BRB-broadcast was done
by a process because it previously received a request {\sc add}($c'$, $p'$, $r$) from client $p'$. Hence the invocation of $\act{add}(r)$ must have preceded the reception of this request, and by transitivity must have preceded the response event of $\pi$. 

{\em Property (b):} This property holds if, for every complete operation $\ledger.\act{add}(r) \in \hist{\ledger}$,
%$\forall r$ such that $\ledger.\act{add}(r) \in \hist{\ledger}$, 
there exists a time $t$ after which every $\ledger.\act{get}()$ operation invoked after $t$ returns sets $S$ that contains $r$.
Let us first consider a complete operation $\pi=\ledger.\act{add}_p(r) \in \hist{\ledger}$ invoked by a client $p$.
We claim that there is some correct server $i$ that eventually adds record $r$ to its replica $S_i$. This is true when $p$ is
Byzantine, since that is the requirement for an $\act{add}(r)$ operation of a Byzantine client to be considered. 

On the other hand, if $p$ is correct, let us assume for contradiction that no correct server $i$ adds record $r$ to its replica $S_i$.
Process $p$ sends request {\sc add}($c$, $p$, $r$) to $2f+1$ servers, out which at least $f+1$ are correct.
By assumption, $r \notin S_j$ when each of these servers $j$ processes the request, and hence all of them
execute \act{BRB-broadcast}({\sc propagate}($j$, {\sc add}($c$, $p$, $r$))) (Line~\ref{C4-L06} of 
Algorithm~\ref{code:ec-server-BGS}). Then, from the \emph{Termination 1} and \emph{Termination 2} properties of the
BRB service, some correct server $i$ will BRB-deliver at least $f+1$ messages {\sc propagate}($j$, {\sc add}($c$, $p$, $r$)) from different servers $j$, and then record $r$ will be added to $S_i$ in Line~\ref{C4-L11}. This is a contradiction, and some correct server
$i$ eventually adds record $r$ to its replica $S_i$ when client $p$ is correct.

Hence, we have that, independently of whether $p$ is correct, some correct server
$i$ added record $r$ to its set $S_i$.
Observe that a correct process $i$ only adds records to its replica $S_i$, in Line~\ref{C4-L11}, when BRB-deliver at least $f+1$ messages {\sc propagate}($j$, {\sc add}($c$, $p$, $r$)) from different servers $j$. Then, if $i$ adds $r$ to $S_i$, from the \emph{Termination 2} property 
all correct servers will eventually BRB-deliver at least $f+1$ messages {\sc propagate}($j$, {\sc add}($c$, $p$, $r$)) from different servers $j$, and they will all add $r$ to their replicas.

Let $t$ be the first time all correct servers have $r$ in their corresponding replica. Then, for every $\ledger.\act{get}()$ operation invoked after $t$, the responses from correct servers collected in Line~\ref{C3-L05} of Algorithm~\ref{code:ec-client-BGS} 
have replicas $S_i$ with record $r$. Since there at least $f+1$ responses from correct servers, in Line~\ref{C3-L06} $r$ is included in the set $S$, which is then returned by $\ledger.\act{get}()$.
\end{proof}
%
%the {\sc add}($r$) request to at least $2f+1$ servers, so that at least $f+1$ correct ones will process it calling BRB-Broadcast in Line~\ref{C4-L06} of Algorithm~\ref{code:ec-server-BGS}; for BRB-Validity and BRB-Termination $f+1$ {\sc propagate}(-, {\sc add}(-, -, $r$)) coming from different servers will be eventually BRB-Delivered to all correct servers. 
%
%This enables all correct servers to eventually add $r$ to their local $S_i$ because of the fulfilment of $f+1$ requirement in Line~\ref{C4-L10} of Algorithm~\ref{code:ec-server-BGS}. $\ledger.\act{get}()$ algorithm, on his side, returns $r$ if it was seen at least in $f+1$ out of $2f+1$ different responses; since at most $f$ can have Byzantine behaviour and eventually all server will include $r$ in their local $S_i$, eventually there will exist a moment in which $\ledger.\act{get}()$ will always have $f+1$ responses including~$r$.\newline
%
%Note that this last statement does not mean that as soon as $r$ appears in a $\ledger.\act{get}()$ it will definitely appear in all subsequent one. Because of byzantine servers, until {\sc propagate}(-, {\sc add}(-, -, $r$)) is not BRB-Delivered to all correct servers, $r$ can be returned or not depending on the set of servers chosen by the client and their byzantine failure.
%
\section{Applications of \BDGSO{}s}
\label{sec:applications}
In this section we demonstrate the usability of \BDGSO{}s by using them to provide consensus-free solutions to the Atomic Appends and Atomic Adds problems, as well as {a consensus-free construction of a Single-Writer Byzantine-tolerant Distributed Ledger Object (BDLO).}

\subsection{The Atomic Appends Problem}
\label{sec:AtomicAppends}
The {\em Atomic Appends} problem was introduced in~\cite{DLO_SIGACT18} as a basic interconnection problem among distributed ledgers (DLOs); see Appendix~\ref{appendix:DLO} for basic definitions with respect to DLOs. Informally, {Atomic Appends} requires that several records must be appended in their corresponding DLOs, so that either
\emph{all} records are appended (each in the appropriate DLO)
%\nn{NN:shall we just say DLO instead of BDLO?} 
or
\emph{none} is appended to any DLO. In~\cite{BAA2020}, the problem was formulated (and solved) in the presence of Byzantine servers and clients.\vspace{.3em}

\noindent{\bf Definition of the problem.}
For completeness, we provide the formal definition as given in~\cite{BAA2020}.
A record $r$ \emph{depends} on a record $r'$ if $r$ may
be appended on its intended BDLO, say $\mathcal{L}$, only if $r'$ is
appended on its intended BDLO, say $\mathcal{L}'$. Two records, $r$ and $r'$ are
\emph{mutually dependent} if $r$ depends on $r'$ and $r'$ depends on
$r$.
\begin{definition}[$2$-AtomicAppends~\cite{BAA2020}]
	\label{def:2AA}
	Consider two clients, $p$ and $q$, with mutually dependent
        records $r_p$ and $r_q$.  We say that records $r_p$ and $r_q$
        are {\em appended atomically} in BDLO $\mathcal{L}_p$ and BDLO
        $\mathcal{L}_q$, respectively, when:
\begin{itemize}
 \item AA-safety (AAS): \emph{The record $r_p$ of a
                  correct client $p$ is appended in $\mathcal{L}_p$ only
                  if the record of the other client $q$ (which may be
                  correct or not) is also appended in $\mathcal{L}_q$.}
 \item AA-liveness (AAL): \emph{If both $p$ and $q$
                  are correct, then both records are appended
                  eventually.}
\end{itemize}
\end{definition}

Observe that it is not possible to prevent a faulty client $q$
  from appending its record $r_q$, even if the correct client $p$ does
  not append its record. What the safety property AAS guarantees is that the opposite
  cannot happen. This is analogous of the property in atomic
  cross-chain swaps~\cite{DBLP:conf/podc/Herlihy18} that a correct
  process cannot end up worse than at the beginning.

We say that an algorithm {\em solves} the $2$-AtomicAppends problem\footnote{The $k$-\emph{AtomicAppends} problem, for $k\ge 2$, is a
generalization of the $2$-AtomicAppends that can be defined in the
natural way: $k$ clients, with $k$ mutually dependent records, to be
appended to $k$ BDLOs. To keep the presentation simple, we focus in the case of $k=2$.} 
under a given {system, if it guarantees 
  properties AAS and AAL of Definition~\ref{def:2AA} in every
  execution.} Since we consider Byzantine failures, our
system model with respect to the \atomicappends{} problem is such that
the correct processes want to proceed with the append of the records
{(to guarantee liveness AAL),} while the Byzantine processes may try
to get correct clients to append without the Byzantine clients doing so (to prevent safety AAS).\vspace{.3em}

\noindent{\bf Prior solution.} The solution of $2$-AtomicAppends in~\cite{BAA2020}, following the work in~\cite{DLO_SIGACT18}, uses an auxiliary, special purpose BDLO, called Smart BDLO (SBDLO) to aggregate and coordinate the append of
multiple records. In a nutshell, the solution in~\cite{BAA2020} is as follows. 
Consider two clients, $p$ and $q$, that wish to append atomically two mutually dependent records, $r_p$ and $r_q$, in BDLOs $\mathcal{L}_p$ and
$\mathcal{L}_q$, respectively. Then, they both send matching {\em atomic append requests} to the SBDLO. Once both requests are received by the SBDLO (otherwise the atomic append never takes place), the servers implementing the SBDLO proceed to append each record to the appropriate BDLOs. In particular, the servers of the SBDLO now become clients issuing the corresponding appends to the servers implementing the DBLOs $\mathcal{L}_p$ and $\mathcal{L}_q$ (each BDLO could be implemented by different servers, as these are essentially different distributed ledger systems). The whole process involves several algorithms: the algorithm run by the clients to issue the atomic append request, the algorithm run by servers to implement the SBDLO, and the algorithm run by the servers of the SBDLO (as clients) with the servers of each individual BDLO. Once both append operations are completed, the SBDLO servers acknowledge this to clients $p$ and $q$. It is shown that the combination of these algorithms guarantee Properties AAS and AAL above, despite having Byzantine servers and clients.\vspace{.3em} 

\noindent{\bf Our approach.} In this work we treat the part of the individual BDLOs ($\mathcal{L}_p$ and $\mathcal{L}_q$) implementations as black boxes and we focus on the auxiliary entity that is used for coordinating the atomic append requests. In~\cite{BAA2020}, the SBDLO, being a Distributed Ledger object, required the use of a Byzantine Total-order Broadcast~\cite{DBLP:conf/srds/MilosevicHS11} service. It was shown in~\cite{DLO_SIGACT18} that consensus is required for implementing a (B)DLO; this is because of the strong prefix property of (B)DLOs (see Appendix~\ref{appendix:DLO}), which requires that records must be totally ordered. Hence, atomic appends was solved using consensus to implement the SBDLO. However, one can notice that in the auxiliary entity, the atomic append requests do not need to be totally ordered. It is sufficient to only keep track whether both requests have been made. In other words, \emph{why keeping these requests in a sequence, and not in a set?}

In this respect, we show that instead of using a special purpose BDLO as the auxiliary entity, we can simply use a special purpose eventually consistent \BDGSO, which we will be referring as S\BDGSO. As we have seen in Section~\ref{GSetimplementation}, eventually consistent \BDGSO{}s can be implemented without consensus (instead of a Byzantine total-order broadcast service, we use only a Byzantine reliable broadcast service), yielding a {\em consensus-free solution to Atomic Appends} (with respect to the actual atomic append requests).\vspace{.3em}

%-------------------------------------------------------------
%\subsection{\atomicappends{} Using a Smart BDGO}
%\label{subsec:SBDLO}

    	\begin{algorithm}[t!]
			\caption{\small API  for the $2$-\act{AtomicAppend} of records $r_p$ and $r_q$ in ledgers $\mathcal{L}_p$ and $\mathcal{L}_q$ by clients $p$ and $q$, respectively, using S\BDGSO{} $\ledger$. Code for Client $p$.} 
			\label{code:sdlo-client-1}
		        \begin{algorithmic}[1]
    				\Function{$\act{AtomicAppends}$}{$p, \{p,q\}, r_p, \mathcal{L}_p, r_q$}
    				\State $\ledger.{\sc add}(\tup{p, \{p,q\}, r_p, \mathcal{L}_p, r_q})$
    				\State \textbf{return} {\sc ack}
    				\EndFunction
    				\State {// Client $p$ will know the \atomicappends{} operation was completed successfully
    				%}
    				%\State { 
    				when it receives notifications from $f+1$ different S\BDGSO{} servers. // }
    			\end{algorithmic}
		\end{algorithm}

\noindent{\bf Our solution.} Algorithm~\ref{code:sdlo-client-1} specifies how processes $p$ and $q$ delegate the task of appending their records in the respective ledgers. They do so by adding in the S\BDGSO{} a
description of the \atomicappends{} operation to be completed. Client $p$ uses the {\sc ADD}
operation to provide the S\BDGSO{} with the data it requires to complete
the \atomicappends{}, namely the participants in the \atomicappends{},
the record $r_p$, the BDLO $\mathcal{L}_p$, and the record $r_q$ the other
client is appending. (The other client must do the same.)

\setlength{\columnsep}{10pt}
\begin{figure}[t!]
	\begin{algorithm}[H]
			\caption{\small Smart  Byzantine-tolerant  \DGSO; Only the code for the {\sc ADD} operation is shown; Code for Server $i$}
			\label{code:SBGS}
			\begin{algorithmic}[1]
				\State \textbf{Init:} $S_i \leftarrow \emptyset$
				\Receive{{\sc add}($c$, $p$, $r$)} \Comment{{Signature of $p$ is validated}}
				\If{($r \notin S_i$)} 
				\State \act{BRB-broadcast}({\sc propagate}($i$, {\sc add}($c$, $p$, $r$)))
				\State \textbf{wait until} $r \in S_i$ %\label{C4-L07}
				\EndIf
				\State  \textbf{send} response {\sc addResp}($c$, $i$, {\sc ack}) to $p$
				\EndReceive
				\Upon{\act{BRB-deliver}({\sc propagate}($j$, {\sc add}($c$, $p$, $r$)))} \Comment{{Signatures of $j$ and $p$ are validated}} %\label{C4-L09}
				\If{($r \notin S_i$) and ({\sc add}($c$, $p$, $r$) was received from $f+1$ different servers $j$)} 
				\State $S_i \leftarrow S_i  \cup \{r\}$ %\label{C4-L11}
				\If {($r.v=\tup{p, \{p,q\}, r_p, \mathcal{L}_p, r_q}$) and \label{l:record1}\\
 \hskip\algorithmicindent\hspace{0.5cm} ($\exists r' \in S_i : r'.v=\tup{q, \{p,q\}, r_q, \mathcal{L}_q, r_p}$)} \label{l:record2}
    				\State $\mathcal{L}_p.{\append}(r_p)$; \label{l:calls} %\label{appendrp}
    				 $\mathcal{L}_q.{\append}(r_q)$			%\label{appendrq}	
    				\State {Notify clients $p$ and $q$
    				%}
    				%\State {
    				that records $r_p$ and $r_q$ have been appended to $\mathcal{L}_p$ and $\mathcal{L}_q$} \label{l:end}
    				%} 
    				%\State {
    				%respectively}
				\EndIf
				\EndIf
				\EndUpon
			\end{algorithmic}
		\end{algorithm}\vspace{-2em}
\end{figure}

For the S\BDGSO{}, it suffices to implement an eventually consistent \BDGSO{} in which up to 
%$t$ clients can be Byzantine and up to 
$f$ servers out of $n\ge 3f+1$
are Byzantine, but that \emph{only allows the creator of a record to add it} (signatures are used for this purpose).
Algorithm \ref{code:SBGS} describes the {\sc add} operation of the S\BDGSO{}
(the rest of the algorithm is as in
Algorithm~\ref{code:ec-server-BGS}).  As expected, it is very similar to the
implementation of a \BDGSO{}, but with an important difference: 
Every time a record $r$ is added to the sequence $S_i$, it is checked
whether a matching record $r'$ is already there. This is the case if
$r.v=\tup{p, \{p,q\}, r_p, \mathcal{L}_p, r_q}$, and $r'.v=\tup{q,
  \{p,q\}, r_q, \mathcal{L}_q, r_p}$. If so, the corresponding append
operations are issued in the respective BDLOs $\mathcal{L}_p$ and
$\mathcal{L}_q$ (the implementation of this part is the one described in~\cite{BAA2020}). {So, essentially the servers implementing the S\BDGSO,
  become proxies of clients $p$ and $q$, and once the above condition
  is met, they issue the corresponding appends. When these appends are
  successful, the servers implementing the ledgers $\mathcal{L}_p$ and
  $\mathcal{L}_q$, acknowledge the S\BDGSO{} servers. In turn, the S\BDGSO{}
  servers notify clients $p$ and $q$ that records $r_p$ and $r_q$ have
  been appended to $\mathcal{L}_p$ and $\mathcal{L}_q$, respectively.}
{Clients $p$ and $q$ will know that the \atomicappends{} operations
  was completed successfully when they receive these notifications
  from at least $f+1$ different S\BDGSO\ servers.}

\remove{As mentioned above, each of the ledgers $\ledger_p$ and $\ledger_q$
are BDLOs with a known, {bounded} set $N$, of at least $2t+1$ clients
({which are the servers implementing the SBDLO $\ledger$}), out of
which at most $t$ can be Byzantine.  These ledgers are implemented in
a system of at least $2f+1$ servers out of which at most $f$ can be
Byzantine, as presented in Algorithm~b-ByDL (Code
\ref{impl:at-server-bounded}). Hence, a record is appended only if at
least $t+1$ clients from $N$ issue append operations of the
record. {Notice that, differently from the case of ad-hoc clients,
  in the case
  of SBDLO at least $t+1$ correct SBDLO servers will receive the
  requests by the external clients $p$ and $q$ and will issue the same
  \append{} operation in ledgers $\ledger_p$ and $\ledger_q$, making
  bounded BDLOs a practical system.} Moreover, Line 2 of Code~\ref{impl:at-server-bounded} is modified to verify that a client $p$
attempting to append is in fact in the set $N$ of authorized clients. Figure~\ref{fig:atomicappends} illustrates how an \emph{AtomicAppends} procedure works

\begin{figure}[!t]
\centering
\includegraphics[width=0.9\linewidth]{AtomicAppends3.pdf}
 \caption{This drawing visualizes how  the different algorithms involved in an atomic append interact.}
 \label{fig:atomicappends}
\end{figure}
}

\begin{theorem}
\label{SBDGO-correct}
The combination of Algorithm~\ref{code:sdlo-client-1} and Algorithm~\ref{code:SBGS} solves the $2$-AtomicAppends problem.
\end{theorem}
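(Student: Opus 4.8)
The plan is to verify directly the two properties of Definition~\ref{def:2AA}, AA-safety and AA-liveness, leaning throughout on the guarantees of the S\BDGSO\ established in Theorem~\ref{proof:ec} and on the append primitive of the underlying BDLOs $\mathcal{L}_p,\mathcal{L}_q$, which I treat as the black box of~\cite{BAA2020}: with the $\le f$ possibly-Byzantine S\BDGSO\ servers acting as its clients, each BDLO commits a record only when sufficiently many of these clients issue its \append, so that (i) at least one \emph{correct} S\BDGSO\ server must issue an append for it to commit, and (ii) the $\ge 2f+1$ correct S\BDGSO\ servers always suffice. Two facts about the S\BDGSO\ will do most of the work: first, the creator-only restriction enforced through signatures, so that a descriptor $\tup{q,\{p,q\},r_q,\mathcal{L}_q,r_p}$ can sit in a correct server's set $S_i$ only if client $q$ itself added it; and second, the key step extracted from the proof of Theorem~\ref{proof:ec}, namely that if any correct server inserts a record into its replica then, by \emph{Termination 2} of the BRB service, every correct server eventually inserts it as well.

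For liveness (AAL) I would assume both $p$ and $q$ correct. Then both call \act{AtomicAppends} and hence both issue the matching descriptors $\tup{p,\{p,q\},r_p,\mathcal{L}_p,r_q}$ and $\tup{q,\{p,q\},r_q,\mathcal{L}_q,r_p}$ to the S\BDGSO\ via \act{add}. By Byzantine Completeness (Lemma~\ref{l-EC-BC}) both \act{add} operations complete, which forces at least one correct server to have inserted each descriptor; by the propagation fact above, every correct server eventually holds both. As soon as a correct server inserts the second of the two descriptors (the insertion rule of Line~\ref{C4-L11} of Algorithm~\ref{code:ec-server-BGS}), the matching test of Lines~\ref{l:record1}--\ref{l:record2} of Algorithm~\ref{code:SBGS} succeeds and the server executes Line~\ref{l:calls}, issuing $\mathcal{L}_p.\append(r_p)$ and $\mathcal{L}_q.\append(r_q)$. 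Since all $\ge 2f+1$ correct servers eventually do this, both BDLO appends commit, and at least $f+1$ correct servers reach Line~\ref{l:end} and notify $p$ and $q$; by the completion rule of Algorithm~\ref{code:sdlo-client-1} both clients then observe success.

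For safety (AAS) I would take as hypothesis that the record $r_p$ of a correct client $p$ has been appended in $\mathcal{L}_p$ and derive that $r_q$ must be appended in $\mathcal{L}_q$. By property (i) of the black box, some \emph{correct} S\BDGSO\ server $i$ executed $\mathcal{L}_p.\append(r_p)$ at Line~\ref{l:calls}; but a correct server reaches that line only through the test of Lines~\ref{l:record1}--\ref{l:record2}, so both descriptors $\tup{p,\dots}$ and $\tup{q,\dots}$ were present in $S_i$, and the same execution of Line~\ref{l:calls} also issued $\mathcal{L}_q.\append(r_q)$. Because $i$ is correct and inserted both descriptors through the BRB mechanism, the propagation fact shows that every correct server eventually holds both descriptors, detects the match (the test being applied to each newly inserted descriptor against its partner, regardless of insertion order, as described in the text), and issues $\mathcal{L}_q.\append(r_q)$; hence the $\ge 2f+1$ correct servers issue that append and $r_q$ commits in $\mathcal{L}_q$. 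Here the creator-only restriction is what makes the conclusion meaningful even for Byzantine $q$: the presence of $q$'s descriptor reflects $q$'s own \act{add}, so the mechanism never fires on a fabricated request.

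The step I expect to be the main obstacle is exactly this bootstrap in the safety direction: moving from \emph{one} correct server holding both descriptors and issuing both appends to \emph{enough} correct servers issuing $\mathcal{L}_q.\append(r_q)$ to meet the BDLO commit threshold. A single correct server's append does not commit a record, so the argument must route through \emph{Termination 2} to spread both descriptors to all correct servers, and must rely on the symmetry of the matching test so that each correct server actually fires Line~\ref{l:calls}. Stating this interplay between the S\BDGSO\ consistency layer and the black-box BDLO threshold cleanly—while invoking the signature/creator-only restriction precisely where it is needed to keep AAS meaningful under a Byzantine $q$—is the delicate part of the proof.
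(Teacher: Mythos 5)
Your proof is correct, but note that the paper itself offers almost no proof at all: its entire argument for Theorem~\ref{SBDGO-correct} is a single sentence deferring to the proof in~\cite{BAA2020}, ``taking into consideration the above discussion'' (namely, that the S\BDGSO{} servers act as proxies and that matching descriptors trigger the appends). In effect you have supplied the argument the paper leaves implicit, structured the way the deferral intends: verify AAS and AAL of Definition~\ref{def:2AA} directly, treating the BDLOs $\mathcal{L}_p,\mathcal{L}_q$ as the threshold-based black boxes of~\cite{BAA2020} (a record commits only if at least $f+1$ of the S\BDGSO{} servers, hence at least one correct one, issue its append, while the $\geq 2f+1$ correct servers always suffice). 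The two ingredients you isolate are exactly what makes the adaptation from the SBDLO of~\cite{BAA2020} to the S\BDGSO{} sound: (1) the propagation fact extracted from the proof of Theorem~\ref{proof:ec} (if one correct server inserts a descriptor, \emph{Termination 2} of the BRB service forces every correct server to eventually insert it), and (2) the order-independence of the matching test of Lines~\ref{l:record1}--\ref{l:record2}, which is what lets an unordered set replace the totally ordered intermediary ledger. Your bootstrap in the safety direction --- from one correct server holding both descriptors to all correct servers issuing $\mathcal{L}_q.{\append}(r_q)$ so as to meet the BDLO commit threshold --- is precisely the step the paper's one-line proof glosses over, and you handle it correctly, including the role of the creator-only signature restriction under a Byzantine $q$. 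What the paper's approach buys is brevity and reuse of~\cite{BAA2020}; what yours buys is a self-contained, checkable argument that makes explicit why no ordering of the descriptors (and hence no consensus) is needed in the intermediary object.
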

\noindent
The proof follows from the one in~\cite{BAA2020}, taking into consideration the above discussion.
\vspace{.3em}

%{\bf Chryssis: The proof seems to be a restatement of the one in~\cite{BAA2020}, which changes regarding the correctness of the 
%BDGO algorithm (with unbounded clients). Should we just say that the
%proof follows from the one in~\cite{BAA2020} and that of the BDGO?
%How should we handle this?\\
%}

\noindent{\bf\em Remark:} Following the approach described in~\cite[Section IV-B]{BAA2020},
the S\BDGSO{}  can be replaced by a  “classical”  \BDGSO{} $\ledger$ and  the  use  of  a  set of “helper” processes. The helper processes take upon themselves the task of consulting $\ledger$ periodically in order to find new matching descriptions of and \atomicappends{} operation. When such a match is found, they complete the corresponding appends (as done in Lines~\ref{l:record2}-\ref{l:end} of Algorithm~\ref{code:SBGS}).

\subsection{The Atomic Adds Problem}
\label{sec:AtoomicAdds}
 Inspired by the Atomic Appends problem, one could define the analogous problem on \BDGSO{}s, {\em Atomic Adds}:  several records must be added in their corresponding \BDGSO{}s, and either all records are added (each in the appropriate \BDGSO) or none is added. 
 The formal definition follows that of the Atomic Appends.
 
 \begin{definition}[$2$-AtomicAdds]
	\label{def:2AAdds}
	Consider two clients, $p$ and $q$, with mutually dependent
        records\footnote{The definition of mutually dependent records is as in the case of Atomic Appends, but for \BDGSO{}s instead of BDLOs.} $r_p$ and $r_q$.  We say that records $r_p$ and $r_q$
        are {\em added atomically} in \BDGSO{} $\ledger_p$ and \BDGSO{}
        $\ledger_q$, respectively, when:
\begin{itemize}
 \item AAd-safety (AAdS): \emph{The record $r_p$ of a
                  correct client $p$ is added in $\ledger_p$ only
                  if the record of the other client $q$ (which may be
                  correct or not) is also added in $\ledger_q$.}
 \item AAd-liveness (AAdL): \emph{If both $p$ and $q$
                  are correct, then both records are added
                  eventually.}
\end{itemize}
\end{definition}

The $k$-AtomicAdds problem can be defined in the natural way: $k$ clients, with $k$ mutually dependent records, to be
appended to $k$ \BDGSO{}s.
It is not difficult to see that a consensus-free algorithmic solution for this problem can be derived by simple modifications of our solution to the Atomic Appends problem and the use of the \BDGSO\ implementation of Section~\ref{GSetimplementation}.\vspace{.3em} 

\noindent{\bf Atomic Adds API and server code.} The Atomic Adds API, shown in Algorithm~\ref{addsAPI}, is very close to Algorithm~\ref{code:sdlo-client-1}. The main difference is the content of the data to be added (since now we have G-Sets and not ledgers).
	\begin{algorithm}[t]
			\caption{\small API  for the $2$-\act{AtomicAdds} of records $r_p$ and $r_q$ in \BDGSO{}s $\ledger_p$ and $\ledger_q$ by clients $p$ and $q$, respectively, using S\BDGSO\ $\ledger$. Algorithm for Client $p$.} 
			\label{addsAPI}
		        \begin{algorithmic}[1]
    				\Function{$\act{AtomicAdds}$}{$p, \{p,q\}, r_p, \ledger_p, r_q$}
    				\State $\ledger.{\sc add}(\tup{p, \{p,q\}, r_p, \ledger_p, r_q})$
    				\State \textbf{return} {\sc ack}
    				\EndFunction
    				\State {// Client $p$ will know the Atomic Adds operation was completed successfully
    				%}
    				%\State { 
    				when it receives notifications from $f+1$ different S\BDGSO{} servers. // }
    			\end{algorithmic}
		\end{algorithm}
The code run by the servers of S\BDGSO\ is the same as in Algorithm~\ref{code:SBGS}, with the difference that Lines \ref{l:record1} and \ref{l:record2} check for matching atomic add requests, and once found, in Line~\ref{l:calls} will call the corresponding add operations, 
$\ledger.{\sc add}(r_p)$ and $\ledger.{\sc add}(r_q)$, which are implemented by the algorithms in Section~\ref{GSetimplementation}.
{Note that the condition in Line~\ref{C3-L10} of Algorithm~\ref{code:ec-server-BGS} may have to be expanded in order to prevent the 
(up to $f$) Byzantine servers that implement the S\BDGSO from adding spurious records in $\ledger_p$ and $\ledger_q$. This may be achieved
adding a record $r$ in these \DGSO{}s only if at least $f+1$ clients (the servers of the S\BDGSO) request it to be added, similarly as done
in~\cite{BAA2020}. 
}

The sequence of events is now as described in the Atomic Appends solution, with the difference that no BDLOs are now involved, only \BDGSO{}s.
Putting everything together, we obtain the following, whose proof details are omitted (it is essentially a restatement of the corresponding observations in the atomic appends proof in~\cite{BAA2020}, and the correctness of the algorithms in Section~\ref{GSetimplementation}):

\begin{theorem}
\label{SBDGOAdds-correct}
The combination of the API of Algorithm~\ref{addsAPI} with the revised version of Algorithm~\ref{code:SBGS}, and Algorithms \ref{code:ec-client-BGS} and \ref{code:ec-server-BGS} yield a solution to the $2$-AtomicAdds problem.
\end{theorem}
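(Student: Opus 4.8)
The plan is to reduce the claim to two facts that are already available: (i) the S\BDGSO{}, being the revised \BDGSO{} of Algorithm~\ref{code:SBGS}, inherits Byzantine Completeness and Byzantine Eventual Consistency from Theorem~\ref{proof:ec}; and (ii) the matching logic of Lines~\ref{l:record1}--\ref{l:end} is structurally identical to the Atomic Appends construction, so the argument of~\cite{BAA2020} transfers once \append{} is replaced by \act{add} and each target BDLO by the corresponding \BDGSO{} of Section~\ref{GSetimplementation}. Concretely, I would establish properties AAdS and AAdL of Definition~\ref{def:2AAdds} separately, being careful that only eventual consistency (not atomicity) is guaranteed on the S\BDGSO{} and on the targets $\ledger_p,\ledger_q$.

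For liveness (AAdL), assume both $p$ and $q$ are correct. Each invokes \act{AtomicAdds} (Algorithm~\ref{addsAPI}), performing $\ledger.\act{add}(\cdot)$ of $\tup{p,\{p,q\},r_p,\ledger_p,r_q}$ and $\tup{q,\{p,q\},r_q,\ledger_q,r_p}$ on the S\BDGSO{}. By Byzantine Completeness both \act{add} operations complete, and by Byzantine Eventual Consistency (Theorem~\ref{proof:ec}; property (b) of Definition~\ref{def:ec}) every correct S\BDGSO{} server eventually inserts both descriptions into its replica $S_i$. Since a G-Set replica is grow-only, whichever of the two matching descriptions is inserted second triggers the guard of Lines~\ref{l:record1}--\ref{l:record2} with the first already present; hence every correct server eventually executes the adds of Line~\ref{l:calls}. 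As there are at least $2f+1$ correct S\BDGSO{} servers, the threshold of $f+1$ distinct requesting clients that the targets require for an add is met, so both $r_p$ and $r_q$ are eventually added to $\ledger_p$ and $\ledger_q$.

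For safety (AAdS), suppose $p$ is correct and $r_p$ is added to $\ledger_p$. By the expanded admission rule discussed before the theorem (a record is admitted into $\ledger_p,\ledger_q$ only when at least $f+1$ distinct S\BDGSO{} servers request it), at least one requesting server $i$ is correct. Server $i$ reaches Line~\ref{l:calls} only inside the guard of Lines~\ref{l:record1}--\ref{l:record2}, so at that moment $S_i$ contained both $\tup{p,\{p,q\},r_p,\ledger_p,r_q}$ and the matching $\tup{q,\{p,q\},r_q,\ledger_q,r_p}$; by signature validation the latter was genuinely created by $q$. Because a correct server inserts into $S_i$ (Line~\ref{C4-L10}) only after BRB-delivering $f+1$ matching {\sc propagate} messages, \emph{Termination~2} of the BRB service forces every correct server to eventually BRB-deliver the same messages, insert both descriptions, fire the guard, and issue the add of $r_q$ to $\ledger_q$. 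Again the $f+1$ correct requesters meet the downstream threshold, so $r_q$ is added to $\ledger_q$, which is exactly AAdS.

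The main obstacle is justifying safety under purely eventual consistency and against Byzantine S\BDGSO{} servers. Two points carry the weight: first, the expanded admission rule on $\ledger_p,\ledger_q$ is \emph{essential}, since without it a single Byzantine S\BDGSO{} server could inject $r_p$ into $\ledger_p$ with no matching $r_q$, breaking AAdS; second, because there is no global linearization to invoke, safety must be argued through the monotonicity of the grow-only replicas together with \emph{Termination~2} propagation --- once one correct server holds both descriptions, all correct servers eventually do, making the add of $r_q$ unavoidable. The remaining checks (signatures ruling out forged or one-sided descriptions, and the notification to $p$ and $q$ from $f+1$ servers) are routine and follow the corresponding observations in~\cite{BAA2020}.
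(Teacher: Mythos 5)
Your proposal is correct, and it follows exactly the route the paper prescribes --- but it is worth noting that the paper itself does not actually write this proof out: it states that ``proof details are omitted'' and defers to the corresponding observations in the Atomic Appends proof of~\cite{BAA2020} together with the correctness of the algorithms of Section~\ref{GSetimplementation} (Theorem~\ref{proof:ec}). What you have done is supply precisely those missing details, and the key load-bearing points in your write-up match the paper's surrounding discussion: (i) you correctly identify the expanded $f+1$ admission rule on $\ledger_p,\ledger_q$ (which the paper only mentions informally in the paragraph before the theorem) as \emph{essential} for AAdS, since otherwise a single Byzantine S\BDGSO{} server could inject a one-sided record; (ii) your liveness argument correctly combines BC/BEC of the S\BDGSO{} with the grow-only monotonicity of the replicas, so that whichever matching description lands second fires the guard at every correct server, and the $\geq 2f+1$ correct servers then clear the $f+1$ downstream threshold; and (iii) your safety argument correctly uses \emph{Termination~2} to propagate the pair of descriptions from the one correct requesting server to all correct servers, making the add of $r_q$ unavoidable. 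One minor imprecision: you invoke ``property (b) of Definition~\ref{def:ec}'' to conclude that every correct server eventually holds both descriptions in its replica, but that property speaks only about the sets returned by \act{get} operations; the fact you actually need (all correct replicas eventually contain the record) is an internal claim established \emph{inside} the proof of Theorem~\ref{proof:ec}, so you should cite that argument rather than the definition. This does not affect correctness, since the internal claim is indeed proved there.
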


As noted above, the S\BDGSO{}  could be replaced by a  “classical”  \BDGSO{}  and  the  use  of  a  set of “helper” processes. See~\cite[Section IV-B]{BAA2020} for this approach.

\subsection{Consensus-free Single-Writer BDLO}
\label{subsec:SW}
\setlength{\columnsep}{10pt}
\begin{figure}[t]
%	\begin{multicols}{2}
		%\begin{wrapfigure}{R}{0.43\textwidth}
		%	\begin{minipage}{0.43\textwidth}\vspace{-2em}
		\begin{algorithm}[H]
			\caption{\small Client API and algorithms for Eventually Consistent Single-Writer BDLO $\mathcal{L}$ with $n \geq 4f+1$ and writer process $w$}
			\label{code:ec-client-SWBDLO}
			%	\begin{multicols}{2}
			\begin{algorithmic}[1]
    				\State  \textbf{Init:} $c \leftarrow 0$, $k \leftarrow 0$
    				\Function{$\mathcal{L}.\act{get}$}{~}
    				\State $c \leftarrow c + 1$
    			        	\State
                             {\bf send} request {\sc get}($c$, $p$) to $3f +1$ different servers
    				\State \textbf{wait} responses {\sc getResp}($c$, $i$, $S_i$) from $2f +1$ different servers \label{l:dlo-wait}
    				\State $A \leftarrow \{r:$ record $r$ is in at least $f+1$ sets $S_i \}$\label{code:swbdlo-client-get-f1check}
    				\State $S \leftarrow \{r \in A : (r.k=1) \lor (\exists r' \in A: r.k=r'.k+1) \}$ \label{code:swbdlo-client-get-lcp}
    				\State \textbf{return} sequence $(\rho_1||\rho_2||\ldots||\rho_m)$, where $m=|S|$ and $r_i=(i,\rho_i) \in S$ \label{code:swbdlo-client-seq-create}
    				\EndFunction
    				\Function{$\mathcal{L}.\act{append}$}{$\rho$} \Comment{Can only be called by process $w$}
    				\State $c \leftarrow c + 1$, $k \leftarrow k + 1$ \label{code:swbdlo-client-indexesIncrement}
    				\State $r \leftarrow (k, \rho)$
    				\State \textbf{send} request {\sc add}($c$, $w$, $r$) to $\lfloor n/2 \rfloor +2f+1$ different servers \label{code:swbdlo-client-add}
    				\State \textbf{wait} responses {\sc addResp}($c$, $i$, {\sc ack}) from $f +1$ different servers \label{code:swbdlo-client-addWait}
    				\State \textbf{return} {\sc ack}
    				\EndFunction
    			\end{algorithmic}
		\end{algorithm}
		%	\end{minipage}
		%\end{wrapfigure}
		%\end{multicols}
		%The simplest outline of a distributed ledger then could be the following
		
		\begin{algorithm}[H]
			\caption{\small Server algorithm for Eventually Consistent Single-Writer BDLO $\mathcal{L}$ with $n \geq 4f+1$ and writer process $w$}
			\label{code:ec-server-SWBDLO}
			\begin{algorithmic}[1]
				\State \textbf{Init:} $S_i \leftarrow \emptyset$, $T \leftarrow \emptyset$ % $s \leftarrow 0$, $t \leftarrow 0$
				\Receive{{\sc get}($c$, $p$)} 
				\State \textbf{send} response {\sc getResp}($c$, $i$, $S_i$) to $p$ \label{code:swbdlo-server-getResp}
				\EndReceive
				
				\Receivex{{\sc add}($c$, $w$, $r$)}{$w$} \label{code:swbdlo-server-addReceive}
				\If{($r.k \notin T$)}  \label{code:swbdlo-server-ifnotinT}
				\State \act{BRB-broadcast}({\sc propagate}($i$, {\sc add}($c$, $w$, $r$))) \label{code:swbdlo-server-broadcast}
				\State $T \leftarrow T \cup \{ r.k \}$ \label{code:swbdlo-server-indexUpdate}
				\State \textbf{wait until} $r \in S_i$ \label{code:swbdlo-server-recordAddWait}
				\EndIf
				\State  \textbf{send} response {\sc addResp}($c$, $i$, {\sc ack}) to $w$
				\EndReceivex
				\Upon{\act{BRB-deliver}({\sc propagate}($j$, {\sc add}($c$, $w$, $r$)))}
				\If{({\sc add}($c$, $w$, $r$) was received from $\lfloor n/2 \rfloor +f+1$ different servers $j$)} \label{code:swbdlo-server-delivery}
				%\State \textbf{wait until} $s = r.k$ \label{code:swbdlo-server-indexWait}
				\State $S_i \leftarrow S_i \cup \{ r \}$ \label{code:swbdlo-server-setUpdate}
				%\State $s \leftarrow s + 1$ \label{code:swbdlo-server-indexIncrement}
				\EndIf
				\EndUpon
			\end{algorithmic}
		\end{algorithm}\vspace{-1em}
%	\end{multicols}\vspace{-1em}
	%\caption{Specification of the interface between a process $p$ and a ledger object $\ledger$ (left), and implementation of a centralized ledger object $\ledger$ (right).}
\end{figure}

The \BDGSO{} can also be used to implement a Single-Writer BDLO without relying on consensus. This is obtained with a 
\BDGSO{} that allows only a single writer process $w$ to add records, in which each record has an index determining its position in the BDLO sequence, and that does not allow adding more than one record with the same index.
%\nn{\sout{Our algorithm have to fulfil new requirements in respect to the Eventually Consistent \BDGSO{}. Initially we have to guarantee that only the writer process $w$ can issue $Add()$ operations.}} 
%\nn{[NN:i think we can just consider it as part of the model that only a single writer exists so the following discussion is redundant.]} 
Allowing only add operations from $w$ is trivially achieved by validating the signature when a request is received by a server,
and will not be done explicitly in our algorithms.
%\sout{We are going to consider it as part of $receive$ and it }
To prove correctness we need to show that any execution of the Single-Writer BDLO $\mathcal{L}$ 
we implement satisfies the Byzantine Completeness and Byzantine Eventual Consistency properties, but redefined for the
$\mathcal{L}.\act{append}()$ and $\mathcal{L}.\act{get}()$ operations, and sequences instead of sets (see Appendix~\ref{appendix:DLO}).
%Specifically,
%Byzantine Completeness requires that all $\mathcal{L}.\append()$ and $\mathcal{L}.\get()$ operations invoked by correct processes eventually complete. An eventually consistent BDLO is defined as follows:
Additionally, 
the Byzantine Strong Prefix property, as defined in \cite{BAA2020}, must be satisfied as well.

\begin{definition}[Byzantine Strong Prefix~\cite{BAA2020}]
    \label{def:bsp}
    If two \textit{correct clients} of a BDLO $\mathcal{L}$ issue two $\mathcal{L}.\act{get}()$ operations that return record sequences $S$ and $S'$ respectively, then either $S$ is a prefix of $S'$ or vice-versa.
\end{definition}

Algorithm~\ref{code:ec-client-SWBDLO} presents the API and the code executed by a client of the Single-Writer BDLO $\mathcal{L}$, while 
Algorithm~\ref{code:ec-server-SWBDLO} presents the code executed by the servers that implement it. 
These algorithms require that the number of servers $n$ satisfies $n \geq 4f+1$.
As can be seen, the append operation assigns an index $k$ to every record data $d$ appended by $w$, so the record added is in fact the pair $r=(k, d)$. Observe that
Algorithms~\ref{code:ec-client-SWBDLO} and \ref{code:ec-server-SWBDLO} are very similar to Algorithms~\ref{code:ec-client-BGS} and~\ref{code:ec-server-BGS}, but have a few differences. (1) In Algorithm~\ref{code:ec-client-SWBDLO}, $\mathcal{L}.\act{append}(d)$
adds an index $k$ to each record and sends the append requests to a potentially much larger set of $\lfloor n/2 \rfloor +2f+1$ servers, 
while $\mathcal{L}.\act{get}()$ filters the set to be returned so it is a sequence of records with consecutive indices. On its hand,
(2) Algorithm~\ref{code:ec-server-SWBDLO} avoids appending different records with the same index $r.k$ by using this field for comparisons, keeping track in $T$ of the indices that have been BRB broadcast, and collecting at least $\lfloor n/2 \rfloor +f+1$ messages
{\sc propagate}($j$, {\sc add}($c$, $w$, $r$)) before adding $r$ to the set.
Observe that the requirement on $n$ comes from the fact that the append requests are sent to $\lfloor n/2 \rfloor +2f+1$ servers (and hence
$n \geq \lfloor n/2 \rfloor +2f+1$).

\begin{theorem}
Algorithms \ref{code:ec-client-SWBDLO} and \ref{code:ec-server-SWBDLO} implement an eventually consistent Single-Writer BDLO~$\mathcal{L}$.
\end{theorem}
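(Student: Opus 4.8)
The plan is to establish, with respect to correct clients, the three properties that make $\mathcal{L}$ an eventually consistent Single-Writer BDLO: Byzantine Completeness (BC), Byzantine Eventual Consistency (BEC) adapted to the \act{append}/\act{get} operations and to record \emph{sequences}, and the Byzantine Strong Prefix property of Definition~\ref{def:bsp}. I would dispatch BC first by replaying the structure of Lemma~\ref{l-EC-BC} with the new quorum sizes. A $\mathcal{L}.\act{get}()$ still terminates because at least $2f+1$ of the $3f+1$ contacted servers are correct and answer at Line~\ref{code:swbdlo-server-getResp}. For $\mathcal{L}.\act{append}(\rho)$ by a correct $w$, of the $\floor{n/2}+2f+1$ contacted servers at least $\floor{n/2}+f+1$ are correct; each sees $r.k\notin T$ for the fresh index $k$ (a sequential correct writer never reuses an index, and no one can forge $w$'s signature), so each BRB-broadcasts {\sc propagate} at Line~\ref{code:swbdlo-server-broadcast}. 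By Termination~1 and~2 every correct server BRB-delivers these $\floor{n/2}+f+1$ distinct messages and adds $r$ at Line~\ref{code:swbdlo-server-setUpdate}, ending the wait and yielding the required $f+1$ acknowledgements. Along the way I would note the constraint $n\ge \floor{n/2}+2f+1$, i.e. $n\ge 4f+1$, that makes the contacted set well-defined.

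The crux is a new \emph{index-uniqueness} lemma: no two distinct records sharing the same index $k$ can both be added to the set $S_i$ of any correct server, and this holds \emph{even when $w$ is Byzantine}. I would prove it by quorum intersection. A correct server adds $r=(k,\rho)$ only after BRB-delivering {\sc propagate} for that exact $r$ from $\floor{n/2}+f+1$ different servers (Line~\ref{code:swbdlo-server-delivery}), hence, by the Validity of BRB, from at least $\floor{n/2}+1$ correct servers, each of which BRB-broadcast it at Line~\ref{code:swbdlo-server-broadcast}. If some correct server also added $r'=(k,\rho')$ with $\rho'\neq\rho$, there would likewise be $\floor{n/2}+1$ correct broadcasters for $r'$. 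Two subsets of servers of size $\floor{n/2}+1$ out of $n$ intersect in at least one (correct) server for either parity of $n$; but a correct server inserts $k$ into $T$ at Line~\ref{code:swbdlo-server-indexUpdate} and therefore BRB-broadcasts at most one record per index, a contradiction. Consequently the record occupying each index is globally unique, which I would package as a well-defined partial \emph{canonical sequence} $r_1,r_2,\dots$

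With index uniqueness in hand, BSP and BEC follow. For BSP I would first observe that every record a correct client places in $A$ at Line~\ref{code:swbdlo-client-get-f1check} lies in at least one correct server's $S_i$ (of the $f+1$ vouching sets at most $f$ are Byzantine), hence equals the canonical record at its index; the filter at Line~\ref{code:swbdlo-client-get-lcp} together with the construction at Line~\ref{code:swbdlo-client-seq-create} then retains exactly the maximal gap-free prefix $r_1,\dots,r_m$ of the canonical sequence. Two correct $\mathcal{L}.\act{get}()$ calls thus return two prefixes of one and the same sequence, so one is a prefix of the other. For BEC I would reuse the structure of Theorem~\ref{proof:ec}: part~(a) holds because each returned record is vouched for by a correct server and therefore, through Validity, traces back to a genuine {\sc add} issued by $w$ before the response event; part~(b) holds because once $\mathcal{L}.\act{append}(r)$ with index $k$ completes, the monotone grow-only sets guarantee $r$ eventually sits at every correct server, and since the sequential writer produced indices $1,\dots,k$ in order and each such record likewise propagates to all correct servers, every sufficiently late $\mathcal{L}.\act{get}()$ sees a gap-free prefix reaching index $k$ and returns a sequence containing $r$.

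I expect the main obstacle to be the index-uniqueness lemma: getting the quorum arithmetic exactly right (that two correct quorums of size $\floor{n/2}+1$ intersect for both parities of $n$) and arguing that the at-most-once broadcast enforced by $T$ forbids a correct server from vouching for two records at the same index. A secondary subtlety is verifying that the client-side filter of Lines~\ref{code:swbdlo-client-get-f1check}--\ref{code:swbdlo-client-seq-create} really extracts a contiguous prefix of the canonical sequence, so that any two returned sequences are prefix-comparable; once that is settled, BSP and both parts of BEC are routine adaptations of the corresponding arguments in Theorem~\ref{proof:ec}.
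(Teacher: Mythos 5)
Your proof is correct, and it rests on exactly the same technical ingredients as the paper's own proof: the same quorum counting for Byzantine Completeness, the same Validity-based tracing of records back to genuine {\sc add} requests for BEC part~(a), the same propagate-then-induct-on-indices argument for BEC part~(b), and---crucially---the same quorum-intersection argument (two sets of $\lfloor n/2\rfloor+f+1$ {\sc propagate}-senders each contain $\lfloor n/2\rfloor+1$ correct servers, hence share a correct server, which by Lines~\ref{code:swbdlo-server-ifnotinT}--\ref{code:swbdlo-server-indexUpdate} BRB-broadcasts at most one record per index). The difference is in the decomposition, and it buys you something: the paper proves Byzantine Strong Prefix directly by contradiction, running the intersection argument inline on a hypothetical position $i$ with $r_i\neq r'_i$, whereas you extract that argument into a standalone index-uniqueness lemma (stated to hold even for Byzantine $w$), obtain a global ``canonical sequence,'' and then get BSP as the near-trivial statement that any two returned sequences are prefixes of that one sequence; the same lemma also feeds your BEC argument, so the structure is cleaner and more reusable. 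Two points to tighten in a full write-up. First, your BEC part~(b) is worded for a correct writer (``the sequential writer produced indices $1,\dots,k$ in order''); for Byzantine $w$ you must instead invoke, as the paper does, the restriction built into the BEC property itself: the record was returned by some correct client's $\act{get}$, hence it and all lower-indexed records of that returned sequence sit in some correct server's replica, and Termination~2 then propagates the whole prefix to all correct servers. Second, your claim that Lines~\ref{code:swbdlo-client-get-f1check}--\ref{code:swbdlo-client-seq-create} retain ``exactly the maximal gap-free prefix'' is not literally true of the filter as written, since Line~\ref{code:swbdlo-client-get-lcp} looks back only one index (e.g., indices $\{1,3,4\}$ in $A$ would pass $1$ and $4$); the paper's proof glosses over the same point, and you rightly flagged it as the subtlety to check---the prefix-comparability needed for BSP really comes from your index-uniqueness lemma, not from the filter alone.
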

\begin{proof} We will first show Byzantine Completeness, then Byzantine Eventual Consistency and lastly Byzantine Strong Prefix.

%\begin{itemize}
%    \item 
    \noindent{\bf\em Byzantine Completeness}: Let us consider an $\mathcal{L}.\act{get}()$ operation invoked by a correct client $p$. Then
    request {\sc get}($c$, $p$) is sent to $3f+1$ different servers so at least $2f+1$ correct ones will eventually send back their responses; in fact correct servers simply answer back in Line~\ref{code:swbdlo-server-getResp} of Algorithm~\ref{code:ec-server-SWBDLO} with a {\sc getResp}($c$, $i$, $S_i$) containing their local $S_i$. Then, the condition of the wait operation in Line~\ref{l:dlo-wait} is eventually
    satisfied and the operation completes.
    
    Let us now assume that $w$ is correct, and consider an $\mathcal{L}.\act{append}()$ operation. Then, requests {\sc add}($c$, $w$, $r$) will be sent (Line~\ref{code:swbdlo-client-add} of Algorithm~\ref{code:ec-client-SWBDLO}) to $\floor{n/2}+2f+1$ servers, so at least $\floor{n/2}+f+1$ correct ones will receive it. Since $w$ is correct, it increments $k$ before sending the {\sc add}($c$, $w$, $r$) messages (Line~\ref{code:swbdlo-client-indexesIncrement} of Algorithm~\ref{code:ec-client-SWBDLO}), so the same index $k$ is not
    used twice. Then, every correct process that receives {\sc add}($c$, $w$, $r$) finds that $r.k \notin T$ (since $T$ is updated in Line~\ref{code:swbdlo-server-indexUpdate} of Algorithm~\ref{code:ec-server-SWBDLO} only after this check).
    Hence, the \act{BRB-broadcast}({\sc propagate}($i$, {\sc add}($c$, $w$, $r$))) in
    Line~\ref{code:swbdlo-server-broadcast} is called at least by $\floor{n/2}+f+1$ correct servers. For this reason, by the Termination properties of the BRB service, the condition in Line~\ref{code:swbdlo-server-delivery} will eventually be satisfied exactly once and 
    record $r$ is inserted in the local set $S_i$ (Line~\ref{code:swbdlo-server-setUpdate} of Algorithm~\ref{code:ec-server-SWBDLO}). So the condition in Line~\ref{code:swbdlo-server-recordAddWait} of Algorithm~\ref{code:ec-server-SWBDLO} turns true and the response is sent back to the correct client $w$. Since this holds for at least $\floor{n/2}+f+1$ correct servers that received the request, and 
    $\floor{n/2}+f+1 > f+1$, the condition in Line~\ref{code:swbdlo-client-addWait} of Algorithm~\ref{code:ec-client-SWBDLO} will
    be satisfied and the append operation will terminate.
    
    %\item 
    \noindent{\bf\em Byzantine Eventual Consistency}: In order to demonstrate Byzantine Eventual Consistency we need to demonstrate Properties (a) and (b) of Definition \ref{def:ec} with respect to histories $\hist{\mathcal{L}}$ that contain only events of get operations by correct clients and append operations of records that are returned in those get operations. Note that $\mathcal{L}.\act{append}(\rho)$ and $\mathcal{L}.\act{get}()$ are considered in place of $\ledger.\act{add}(r)$ and $\ledger.\act{get}()$. %\newline
    \begin{itemize}
        \item \emph{Property (a):} Let $\mathcal{L}.\act{get}$ be a complete operation in $\hist{\mathcal{L}}$. Let $S$ be the set from where the sequence returned by $\mathcal{L}.\act{get}$ is extracted. Then, from Line~\ref{code:swbdlo-client-get-lcp} of Algorithm~\ref{code:ec-client-SWBDLO}, $\forall r \in S$ the client verified that $r$ belongs to $f+1$ different sets $S_i$ (Line~\ref{code:swbdlo-client-get-f1check} of Algorithm~\ref{code:ec-client-SWBDLO}) returned in a {\sc getResp}($c$, $i$, $S_i$) by different servers. 
        This means that at least a correct server has $r \in S_i$. A server only adds data to its local set $S_i$ if that data was BRB-delivered in {\sc propagate}(-, {\sc add}(-, -, $r$)) messages from $\floor{n/2}+f+1$ different servers. Thanks to the Validity property of the BRB service, this means that at least $\floor{n/2}+f+1$ servers called BRB-broadcast with that message. Again, at least $\floor{n/2}+1$ of them are correct, and they called BRB-broadcast because they received {\sc add}($c$, $p$, $r$) from client $w$. So,
         $\forall r = (k, \rho) \in S$, an $\mathcal{L}.\act{append}(\rho)$ invocation precedes the $\mathcal{L}.\act{get}$ response.%\newline
        
        \item {\em Property (b):} This is equivalent to say that $\forall \rho$ such that $\mathcal{L}.\act{append}(\rho) \in \hist{\mathcal{L}}$, eventually there exist a time $t$ such that $\rho$ will be included in all the sequences returned by complete $\mathcal{L}.\act{get} \in \hist{\mathcal{L}}$ invoked after $t$.
        
        Assume $w$ is Byzantine and consider an operation $\mathcal{L}.\act{append}(\rho) \in \hist{\mathcal{L}}$. Then, some $\mathcal{L}.\act{get}()$ operation by a correct client returned a sequence with $r=(k,\rho)$, which means that it received at least $f+1$ messages {\sc getResp}($c$, $i$, $S_i$) in which $r \in S_i$. This means that at least one correct server $i$ had $r \in S_i$. Then, server $i$ BRB-delivered
        at least $\floor{n/2}+f+1$ {\sc propagate}(-, {\sc add}(-, -, $r$)) messages, and by the Termination properties of the BRB service all correct servers $j$ will do as well, and will include $r$ in their local sets $S_j$. After then, any other get operation will always have $f+1$ responses including $r$ from correct servers.
        
        Assume now that $w$ is correct. Then, it sends requests {\sc add}($c$, $w$, $r$) with $r=(k,\rho)$ to at least $\floor{n/2}+2f+1$ servers, so that at least $\floor{n/2}+f+1$ correct ones will process it calling BRB-broadcast in Line~\ref{code:swbdlo-server-broadcast} of Algorithm~\ref{code:ec-server-BGS}. From the Termination properties of the BRB service, $\floor{n/2}+f+1$ {\sc propagate}(-, {\sc add}(-, -, $r$)) messages coming from different servers will be eventually BRB-delivered to all correct servers. Then, all correct servers will eventually add $r$ to their local $S_i$ because of the fulfilment of $\floor{n/2}+f+1$ requirement in Line~\ref{code:swbdlo-server-delivery} of Algorithm~\ref{code:ec-server-BGS}. 
        $\mathcal{L}.\act{get}()$, on its side, returns $r$ if it was seen at least in $f+1$ {\bf out of $2f+1$} different responses. Since at most $f$ can have Byzantine behaviour and eventually all server will include $r$ in their local $S_i$, there will exist a moment in which $\mathcal{L}.\act{get}()$ will always have $f+1$ responses including $r$ from correct servers.
        
        We have shown that, independently of whether $w$ is correct, if $\rho$ is returned in some get operation of a correct client, eventually a record $r=(k,\rho)$ will be in all the sets $S_j$ of all correct servers $j$. Then, there exist a moment in which $r$ is definitely always part of temporary set $A$ in Line~\ref{code:swbdlo-client-get-f1check} of Algorithm~\ref{code:ec-client-SWBDLO} in all get operations.
        Now, in order to ensure that $r$ is part of $S$, and the sequence returned, we need to demonstrate that Line~\ref{code:swbdlo-client-get-lcp} of client Algorithm~\ref{code:ec-client-SWBDLO} does not filter it, eventually. 
        %This is quite straight forward from what we already showed. 
        We proceed by induction. If $r.k=1$ then record $r$ is included in $S$. If $r.k > 1$, assume the claim true for record
        $r'=(k-1,\rho')$. I.e., there is a time $t'$ after which $r'$ is always in $A$. Then, there is a time $t \geq t'$ in which
        both $r$ and $r'$ are always in $A$. After $t$ record $r$ will always be included in $S$ and returned by all get operations.
        %we demonstrated that there exists a moment in which $r=(k,\rho)$ will be definitely added to $A$ and this is true for all record $r'$ correctly issued by $w$. Since correct client $w$ issues record sequentially, then $r'=(k-1,\rho')$ was already issued and, if it already is not, it will be eventually, but definitely, added to $A$ as $r$ was. Since the inclusion of $r=(k,\rho)$ in $A$ implies the inclusion of $r'=(k-1,\rho')$ in $A$ for records issued by the correct writer $w$, until the first record, included by default, then Line~\ref{code:swbdlo-client-get-lcp} of Algorithm~\ref{code:ec-client-SWBDLO} includes $r$ also in $S$.
        %Note that this last statement does not mean that as soon as $\rho$ appears in a $\mathcal{L}.\get()$ it will definitely appear in all subsequent one. Because of byzantine servers, until {\sc propagate}(-, {\sc add}(-, -, -)) is not BRB-Delivered to all correct servers for $r$ and all records previously issued by the writer $w$, $d$ can be returned or not depending on the set of servers chosen by the client as recipient for {\sc get}($c$,$p$) and their byzantine failure.
    \end{itemize}
    \noindent{\bf\em Byzantine Strong Prefix}: Let $S = (r_0, ..., r_a)$ and $S' = (r'_0, ..., r'_b)$ the two sets from which the
    sequences returned by the two $\mathcal{L}.\act{get}()$ operations are extracted in Line~\ref{code:swbdlo-client-seq-create}.
    Just as a convenience in notation, we will refer $r = (k, \rho)$ as $r_k = \rho$. Line~\ref{code:swbdlo-client-get-lcp} of the client Algorithm~\ref{code:ec-client-SWBDLO} ensures that records in $S$ and $S'$ can be ordered and that there are not missing element in the sequence. If $S$ and/or $S'$ are empty then one is trivially prefix of the other. So let's assume they both have at least one element and, without loss of generality, that $a \leq b$. 
    %(the proof is exactly the same at apices swapped with $m \leq n$). 
    Also, let us assume by way a contradiction that the sequence extracted from $S$ is not a prefix of the sequence from $S'$.
    %Saying that $S$ is not prefix of $S'$, 
    This is equivalent to state that $\exists i \leq a: r_i \neq r'_i$. From Line~\ref{code:swbdlo-client-get-f1check} of Algorithm~\ref{code:ec-client-SWBDLO} we know that $r_j$ and $r'_j$ 
    with $1\leq j \leq a$ were returned at least by one correct server in their respective get operations. 
    So, assuming that such an index $i$ exists means that at least two correct servers executed Line~\ref{code:swbdlo-server-setUpdate} of
    Algorithm~\ref{code:ec-server-SWBDLO}
    for the two records, respectively. This implies that, for both, the condition of Line~\ref{code:swbdlo-server-delivery} was true because they received messages {\sc propagate}($-$, {\sc add}($c$,$p$,$r_i$)) from a set $C$ of at least $\floor{n/2}+f+1$ servers, and messages {\sc propagate}($-$, {\sc add}($c$,$p$,$r'_i$)) from a set $C'$ of at least $\floor{n/2}+f+1$ servers. Note that each $C$ and $C'$
    contains at least $\floor{n/2}+1$
    %, per each propagated {\sc add}, came from 
    correct servers. It is obvious that broadcasters of these {\sc propagate} messages must intersect in at least one correct server $j$. So, from the Validity property of the BRB service, at least correct server $j$ called both BRB-broadcast({\sc propagate}($j$, {\sc add}($c$,$p$,$r_i$))) and BRB-broadcast({\sc propagate}($j$, {\sc add}($c$,$p$,$r'_i$))). Line~\ref{code:swbdlo-server-ifnotinT} of Algorithm~\ref{code:ec-server-SWBDLO} filters the received {\sc add}($c$,$p$,$r$) request, so only if $r.k\notin T$
    they are propagated via the BRB-broadcast. If so, Line~\ref{code:swbdlo-server-indexUpdate} adds $r.k$ to $T$ right after the BRB-broadcast. Assume, w.l.o.g., that $j$ received {\sc add}($c$,$p$,$r_i$) before receiving {\sc add}($c$,$p$,$r'_i$). 
    As soon as $j$ BRB-broadcast {\sc propagate}($i$, {\sc add}($c$,$p$,$r_i$))), it added $r_i.k$ to $T$. Then,
    when it received {\sc add}($c$,$p$,$r'_i$) it found that $r'_i.k \in T$, and 
    BRB-broadcast({\sc propagate}($i$, {\sc add}($c$,$p$,$r'_i$))) was not executed.
    %is called with a certain $r.k$, Line~\ref{code:swbdlo-server-ifnotinT} will {\bf not} allow the BRB-broadcast of any other incoming messages {\sc add}($c$, $p$, $r'$) such that $r'.k \in T$.
    %, preventing, in this way, to call again BRB-Broadcast in Line~\ref{code:swbdlo-server-broadcast} with a record containing an already broadcast sequence. 
    But this is a contradiction, and we conclude that our assumption that $\exists i \leq a: r_i \neq r'_i$ is not correct. Hence, the sequence extracted from $S$ must be a prefix of the sequence from $S'$.
    %$\exists \rho_i \in S: d^*_i \neq d'_i$ because we derived that a correct server called BRB-Broadcast on $r^*_i=(i, d^*_i)$ and $r'_i=(i, d'_i)$ while we just showed that server and client algorithms reported in Algorithm~\ref{code:ec-client-SWBDLO} and Algorithm~\ref{code:ec-server-SWBDLO} prevent it to happen. So we can conclude that $\nexists r^*_i \in S: r^*_i \neq r'_i$ and $S$ must be prefix of $S'$ (or vice versa if $S$ is longer than $S'$).
\end{proof}

\section{Conclusions and Future Work}
{In this paper we formally define the notion of a Byzantine-tolerant Distributed \gset Object (\BDGSO) and provide client and server algorithms to implement 
a consensus-free eventually consistent \BDGSO.
\remove{the \BDGSO through its sequential specification in Definition~\ref{def:sspec} and then we continue with the Definition~\ref{def:ec} of the eventually consistent \BDGSO. We provide client and server algorithms to implement such a data structure and we demonstrated that they satisfy eventual consistency in Theorem~\ref{proof:ec}.}} 
Then we proceed with some use cases for \BDGSO. {Building on the work in~\cite{BAA2020} and using \BDGSO{}s we provide a consensus-free solution to the \atomicappends problem.}
\remove{that enhance our previous work in \cite{BAA2020}}
Similarly, we provide a consensus-free solution to the Atomic Adds problem, the analogous problem that uses sets instead of ledgers. 
\remove{that solves a similar problem with multiple sets instead of ledgers.}
Finally, we show how a few modifications to the client and server algorithms of \BDGSO, enable to realise an eventual consistent Single-Writer Byzantine Distributed Ledger without solving consensus among servers but still guaranteeing the Byzantine Strong Prefix property. Single-Writer consensus-free BDLO can be suitable for many use cases, like implementing a cryptocurrency or a punch in/out system for employees of a company.
These are scenarios where realising transactional systems in a Byzantine failure model through consensus may not provide reasonable performance, since the need of updating the system global status prevents sustaining a high throughput of operations. Our future plans include implementing and experimentally evaluating the algorithms proposed in this work, as well as specifying a cryptocurrency based on single-writer BDLOs.

%%%
%%% BIBLIOGRAPHY
%%%
\bibliographystyle{acm}
\bibliography{biblio, biblio2, references}

\appendix
\newcommand{\lledger}{\mathcal{L}}

\section*{\LARGE Appendix}

\section{DLO Definitions}
\label{appendix:DLO}
For the reader's convenience, we provide the basic definitions regarding Distributed Ledger Objects~\cite{DLO_SIGACT18}.\newline

\noindent A {\bf\em ledger} $\lledger$ is a concurrent object that stores a totally ordered sequence $\lledger.S$ of records and supports two operations (available to any process $\pr$):
(i) $\lledger.\act{get}_\pr()$, and (ii) $\lledger.\act{append}_\pr(\rdr)$.
The sequential specification of a ledger $\lledger$ is as follows:

\begin{definition}
\label{def:dlo-sspec}
	The \emph{sequential specification} of a ledger $\lledger$ over the sequential history $\hist{\lledger}$ is defined as follows. The value of the sequence $\lledger.S$ of the ledger is initially the empty sequence.
	If at the invocation event of an operation $\op$ in $\hist{\lledger}$ the value of the sequence in ledger $\lledger$ is $\lledger.S=V$, then: 
	\begin{enumerate}
		\item  if $\op$ is an $\lledger.\act{get}_\pr()$ operation, then the response event of $\op$ returns $V$, and
		\item if $\op$ is an $\lledger.\act{append}_\pr(\rdr)$ operation, 
		% that returns {\sc ack}, 
		then at the response event of $\op$, the value of the sequence in ledger $\lledger$ is $\lledger.S=V\extends r$ (where $\extends$ is the concatenation operator).
		% and
		%\item if $\op$ is an $\act{append}(r)_\pr$ operation that returns {\sc %nack}, 
		%then at the response event of $\op$ the value of the ledger is $\ledger$.
	\end{enumerate}
\end{definition}

\noindent A {\bf\em Distributed Ledger Object}, DLO for short, is a concurrent ledger object
that is implemented in a distributed manner. In particular, the ledger object is implemented by \emph{servers}, and {\em clients}  invoke the $\act{get}()$ and $\act{append}()$ operations.

\begin{definition}
	\label{def:bdlo-ec}
	A DLO $\lledger$ is {\em\bf eventually consistent} if, given any history $\hist{\mathcal{L}}$,
	%there exists a permutation $\sigma$ of the operations in $\hist{\ledger}$ such that: 
	\begin{enumerate}%[alpha]
	\item[(a)] Let $S$ be the sequence of records returned by any complete operation $\pi=\act{get}() \in \hist{\mathcal{L}}$ and $\rho_i$ the generic record that belongs to $S$. For each $\rho_i \in S$ then $\hist{\mathcal{L}}$ contains $\act{append}(\rho_j)$ for $j = 1...i$ whose {invocation events appear before the response event of $\pi$ in $\hist{\mathcal{L}}$, and}
	%and does not succeed $\pi$ \ar{response event}\nn{[NN: succeedence in terms of events is not defined.]}, and
		%\item[(a')] Let $\hist{\ledger}(i)$ be the set of $\act{add}()$ operations in $\hist{\ledger}$ and of the $\act{get}()$ operations issued by process $i$. There exists a permutation $\sigma$ of the operations in $\hist{\ledger}(i)$ such that $\sigma$ follows the sequential specification of $\ledger$, and
		\item[(b)] for every complete operation $\mathcal{L}.\act{append}(\rho) \in \hist{\mathcal{L}}$, there exists a history $\hist{\mathcal{L}}'$ that extends %\footnote{A set $X$ 
			%extends a set $Y$ when $Y\subset X$.}
		% and $X$ contains at least one more operation after $Y$.}}  
		$H_{\mathcal{L}}$ such that,
		for every history $\hist{\mathcal{L}}''$ that extends $H'_{\mathcal{L}}$, every complete operation $\mathcal{L}.\act{get}()$ in $\hist{\mathcal{L}}'' \setminus \hist{\mathcal{L}}'$ returns 
		a sequence that contains $\rho$.  
		%(Observe that the $\ledger.\act{get}()$ may not be in $\hist{\ledger}$.)
	\end{enumerate}
\end{definition}

Observe that the above definition is equivalent to the one given in~\cite[Definition 4]{DLO_SIGACT18}.\newline

\noindent A DLO is an {\bf\em eventually consistent Byzantine-tolerant DLO} (BDLO), if it satisfies the next three properties:
\begin{itemize}%[leftmargin=3.5mm]
\item {\emph{Byzantine Completeness (BC)}: All the \act{get}() and
  {\act{append}()} operations invoked by correct clients eventually
  complete.}
    \item \emph{Byzantine Strong Prefix (BSP)}: If two \textit{correct
      clients} issue two $\act{get}()$ operations that return
      record sequences $S$ and $S'$ respectively, then either $S$ is a
      prefix of $S'$ or vice-versa.
    \item \emph{Byzantine Eventual Consistency (BEC)}: This is the property {of}
     Definition~\ref{def:bdlo-ec} with respect to the $\act{get}()$ operations invoked by correct clients {and the $\act{append}(r)$ operations that append the records $r$ returned in those $\act{get}()$ operations}. 
\end{itemize}

\end{document}